\def\ps@headings{%
\def\@oddhead{\mbox{}\scriptsize\rightmark \hfil \thepage}%
\def\@evenhead{\scriptsize\thepage \hfil \leftmark\mbox{}}%
\def\@oddfoot{}%
\def\@evenfoot{}}
\newtheorem{thm}{Theorem}
\newtheorem{lemma}[thm]{Lemma}
\title{On Delay Constrained Multicast Capacity of Large-Scale Mobile Ad-Hoc Networks}
\author{Shan Zhou and Lei Ying\\
Electrical and Computer Engineering\\
Iowa State University\\
\{shanz, leiying\}@iastate.edu}
\begin{document}
\maketitle
\begin{abstract}
This paper studies the delay constrained multicast capacity of large scale mobile ad hoc networks (MANETs). We consider a MANET consists of $n_s$ multicast sessions. Each multicast session has one source and $p$ destinations. The wireless mobiles move according to a two-dimensional i.i.d. mobility model. Each source sends identical information to the $p$ destinations in its multicast session, and the information is required to be delivered to all the $p$ destinations within $D$ time-slots.  Given the delay constraint $D,$ we first prove that the capacity per multicast session is $O\left(\min\left\{1, (\log p)(\log \left(n_sp\right)) \sqrt{\frac{D}{n_s}}\right\}\right).$\footnote{Given non-negative functions $f(n)$ and
$g(n)$: $f(n)=O(g(n))$ means there exist positive constants $c$ and $m$ such that $f(n) \leq cg(n)$ for all
$ n\geq m;$ $f(n)=\Omega(g(n))$ means there exist positive constants $c$ and $m$ such that $f(n)\geq
cg(n)$ for all $n\geq m;$  $f(n)=\Theta(g(n))$ means that both $f(n)=\Omega(g(n))$ and $f(n)=O(g(n))$ hold; $f(n)=o(g(n))$ means that $\lim_{n\rightarrow \infty} f(n)/g(n)=0;$ and $f(n)=\omega(g(n))$ means that $\lim_{n\rightarrow \infty} g(n)/f(n)=0.$ }  We then propose a joint coding/scheduling algorithm achieving a throughput of $\Theta\left(\min\left\{1,\sqrt{\frac{D}{n_s}}\right\}\right).$ Our simulations show that the joint coding/scheduling algorithm achieves a throughput of the same order ($\Theta\left(\min\left\{1, \sqrt{\frac{D}{n_s}}\right\}\right)$) under random walk model and random waypoint model.

%Our result shows that the multicast capacity of MANETs is a function of the number of sessions ($n_s$) and the delay constraint ($D$), but is \emph{almost} invariant to the number destinations ($p$) in each session. This is different fundamentally from the unicast capacity of MANETs where the capacity depends on the total number of mobiles in the network and the delay constraint \cite{YinYanSri_08}. Our result demonstrates that, given the same number of mobiles in the network, the multicast capacity could be much higher than the unicast capacity.
\end{abstract}

\section{Introduction}
Wireless technology has provided an infrastructure-free and fast-deployable method to establish communication, and has inspired many emerging networks including mobile ad hoc networks (MANETs), which has broad potential applications in personal area networks, emergency/rescue operations, and military battlefield applications. For example, the ZebraNet  \cite{ZhaSadLyo_04} is an MANET used to monitor and study animal migrations and inter-species interactions, where each zebra is equipped with an wireless antenna and pairwise communication is used to transmit data when two zebras are close to each other. Another example is the mobile-phone mesh network proposed by TerraNet AB (a Swedish company) \cite{TerraNet}, where the participated mobile phones form a mesh network and can talk to each other without using the cell infrastructure.

Despite the importance of these emerging applications, the practical deployment of MANETs has been stunned by the lack of basic understanding of MANETs. Over the past few years, there have been a lot of interest in characterizing the capacity of MANETs under a range of mobility models \cite{GroTse_01,DigGroTse_02,BanLiu_03,NeeMod_05,TouGol_04,GamMamPraSha_04,LinShr_04,GamMamParSha_06, GamMamParSha_06_1,ShaMazShr_06,LinShaMazShr_06,MamSha_06,YinYanSri_07_1,YinYanSri_07,GarGiaLeo_07,YinYanSri_08,LeeLeeOh_08,GarGiaLeo_08}. Most of these work assumes unicast traffic flows and studies the unicast capacity. However, multicast flows are expected to be predominant in many of emerging applications.  For example, in battlefield networks, commands need to be broadcast in the network or sent to a specific group of soldiers. In a wireless video conference, the video needs to be sent to all the people attending the conference. To support these emerging applications, it is imperative to have a fundamental understanding of the multicast capacity of wireless networks. In \cite{ShaLiuSri_07,LiTanFri_07}. the authors show that the multicast capacity of {\em static} ad hoc networks is $O\left(\frac{1}{\sqrt{n_s\log (n_s p)}}\right)$ per multicast session.  In \cite{LeeOhLee_08}, the multicast capacity of delay tolerant networks {\em  without delay constraints} is studied, and then the delay associated with the maximum capacity is characterized. In \cite{HuWanWu_09}, the multicast capacity and delay tradeoff is established under a specific routing/scheduling algorithm. In this paper, we study the multicast capacity of large-scale MANETs {\em under a general delay constraint $D.$} We first obtain an upper bound on the delay constrained multicast capacity, which holds for any communication algorithm. We then propose a joint coding/scheduling algorithm with a throughput that differs from the upper bound by just a logarithm factor

In \cite{YinYanSri_08}, the authors establish the optimal delay constrained unicast capacity. The multicast problem differs from the unicast problem in the following aspects:
\begin{itemize}
\item The capacity of MANETs is highly related to the inter-contact rate (the opportunity two mobiles can communicate with each other). Since there are multiple destinations in a multicast session, the inter-contact rates between the source and its destinations and the relays and their destinations increase. The increase of inter-contact rates can improve the capacity of MANETs. On the other hand, in the multicast scenario, the information needs to be transmitted reliably from the source to all its destinations, which generates more traffic in the network and requires more transmission resource than that in unicast.

\item In MANETs, the mobiles communicate with each other using wireless communication. Due to the broadcast nature of wireless communication, all mobiles in the transmission range of a transmitter can simultaneously receive the transmitted packet. In the unicast scenario, only the destination of the packet is interested in receiving the packet; however, in the multicast scenario, all the destinations belonging to the same multicast sessions are interested in the packet. Thus, one transmission might lead to multiple successful deliveries in the multicast scenario, which can increase the capacity of MANETs.
\end{itemize}

Due to these differences mentioned, the multicast capacity of MANETs obeys a different law from the one for unicast.

In this paper, we study the delay constrained the delay constrained multicast capacity by characterizing the capacity scaling law. The scaling approach is introduced in \cite{GupKum_00}, and has been intensively used to study the capacity of ad hoc networks including both static and mobile networks. We consider a MANET consisting of $n_s$ multicast sessions. Each multicast session has one source and $p$ destinations. The wireless mobiles are assumed to move according to a two-dimensional independent and identical distributed (2D-i.i.d) mobility model. Each source sends identical information to the $p$ destinations in its multicast session, and the information is required to be delivered to all the $p$ destinations within $D$ time-slots. The main contributions of this paper include:
\begin{itemize}
%which indicates the multicast capacity is \emph{almost} invariant to the number destinations in each session. This is very different from the unicast capacity of MANETs where the capacity depends on the total number of mobiles in the network (e.g., the unicast capacity of the 2D-i.i.d. mobility mode is $\Theta(\sqrt{D/n})$ \cite{YinYanSri_08}).  This result indicates that, given a MANET with a fixed number of mobiles, the multicast capacity could be much higher than the unicast capacity.

\item Given a delay constraint $D,$ we prove that the capacity per multicast session is $O\left(\min\left\{1, (\log p)(\log \left(n_sp\right)) \sqrt{\frac{D}{n_s}}\right\}\right).$ We then propose a joint coding-scheduling algorithm achieving a throughput of $\Theta\left(\min\left\{1,\sqrt{\frac{D}{n_s}} \right\}\right).$ The algorithm is developed based on an information theoretical approach, where a successful delivery can be separated into three phases --- broadcast, relay and delivery. Each of the phase can be modeled as a virtual communication channel. Based on the virtual channel representation, we propose an algorithm that exploits erasure codes to guarantee reliable multicast over the virtual erasure channels. The idea of exploiting coding has been used in MANETs with unicast flows \cite{YinYanSri_07,YinYanSri_07_1,YinYanSri_08} and mobile sensor networks \cite{ShaSha_06}.

\item Finally, we evaluate the performance of our algorithm using simulations. We apply the algorithm to the 2D-i.i.d. mobility model, random-walk model and random waypoint model. The simulations confirm that the results obtained form the 2D-i.i.d. model holds for more realistic mobility models as well.
\end{itemize}

We would like to remark that \emph{(a)} Similar to the unicast scenario \cite{GroTse_01}, the mobility significantly improves the throughput. While the multicast capacity of a static network is $O\left(\frac{1}{\sqrt{n_s \log n_sp}}\right),$ our algorithm achieves a throughput of $ \Theta(1)$ when $D=n_s.$ \emph{(b)} Our result again demonstrates the substantial benefit of using coding. While the algorithm in \cite{HuWanWu_09} achieves a throughput of $\Theta\left(\frac{1}{p\sqrt{n_sp\log p}}\right)$ with an average delay $\Theta(\sqrt{n_s p \log p}),$ our algorithm achieves a much higher throughput $\Theta\left(\sqrt[4]{\frac{p \log p}{n_s}}\right)$ with the same delay.

%The rest of the paper is organized as follows: In Section \ref{sec:model_C3}, we introduce the basic mobility model and wireless communication model. In Section \ref{sec: main}, we present our main results along with key intuitions. In Section \ref{sec: upper}, we obtain an upper bound on the multicast capacity. In Section \ref{sec: algo}, we propose a joint coding-scheduling algorithm that approaches the upper bound obtained in Section \ref{sec: upper}. The conclusions are given in Section \ref{sec: con}.
%In Section \ref{sec: simu}, we evaluate our theoretical results using simulations, where the algorithm is applied to realistic mobility models.

\section{Model}
\label{sec:model_C3}

\begin{figure}[hbt]
\centering{\epsfig{file=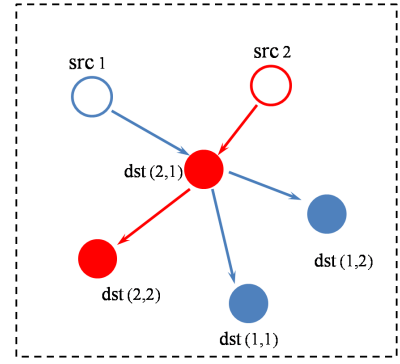,width=2in}}
\caption{A MANET with two multicast sessions, where dst(1,1) and dst(1,2) are the destinations of src 1, and dst(2,1) and dst(2,2) are the destinations of src 2. A mobile can serve as a relay for other multicast sessions. } \label{fig: multi}
\end{figure}

We consider a mobile ad hoc network with $n_s$ multicast sessions in this paper. Each multicast session consists of one source node and $p$ destination nodes as shown in Figure \ref{fig: multi}. Thus, there are $n\triangleq n_s(p+1)$ mobiles in the network. A source sends identical information to all its destinations, and mobiles not belonging to the multicast session can serve as relays. All mobiles are assumed to be positioned in a unit torus, where the left and right edges are connected, and top and bottom edges are also connected. For the theoretical analysis, we assume the mobiles move two-dimensional identical and independently distributed mobility model (2D-i.i.d. mobility model) \cite{NeeMod_05} such that: \emph{(i)} at the beginning of each time slot, a mobile randomly and uniformly selects a point from the unit torus and instantaneously moves to that point; and \emph{(ii)} the positions of mobiles are independent of each other, and independent from time slot to time slot.
%
%NO SIMULATION n Section \ref{sec: simu}, we will apply the proposed algorithm to more realistic mobility models including the two-dimensional hybrid random walk model and the random waypoint model \cite{ShaMazShr_06}.

Each mobile is equipped with a wireless antenna, and can communicate with another mobile within the transmission radius. We first assume that each mobile can adapt power and use an arbitrary transmission radius, and obtain a general upper-bound on the delay-constrained multicast capacity. Then we propose a joint coding/scheduling algorithm which \emph{(i)} achieves a near-optimal throughput, and \emph{(ii)} requires only two transmission ranges $\{L_1, L_2\},$ where $L_1$ is for sending out information from sources, and $L_2$ is for delivering packets to their destinations.

We also adopt the protocol model introduced in \cite{AgaKum_04} for the wireless interference. Let $\alpha_i$ denote the transmission radius of node $i,$ then a transmission from node $i$ to node $j$ is successful under the protocol model if and only if the following two conditions hold: \emph{(i)} the distance between nodes $i$ and $j$ is less than $\alpha_i,$  and \emph{(ii)} if mobile $k$ is transmitting at the same time, then the distance between node $k$ and node $j$ is at least $(1+\Delta)\alpha_k$  (see Figure \ref{fig: inter}), where the $\Delta>0$ defines a guard zone around the transmission. We adopt this protocol model because nodes can transmit with different powers (i.e., different transmission radius) under this model, which allows us to obtain a general upper bound on the multicast capacity of MANETs. Note that under this protocol model, the receiver of node $i$ associates an exclusion region which is a disk with radius $\Delta \alpha_i/2$ and centered at the receiver of node $i.$ All exclusion regions associated with successful transmissions should be disjoint from each other. We further {\em assume that each successful transmission can transmit $W$ bits per time-slot.}
\begin{figure}[hbt]
\centering{\epsfig{file=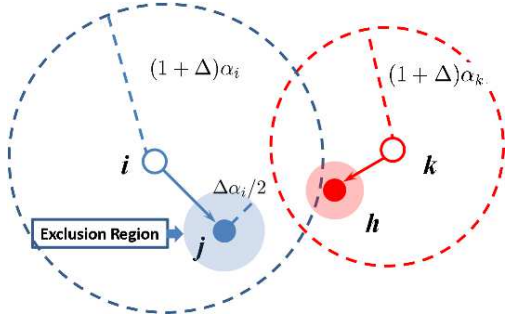,width=2.6in}}
\caption{The two transmissions can succeed simultaneously if the distance between node $j$ and node $k$ is larger than $(1+\Delta) \alpha_k$ and the distance between node $i$ and node $h$ is larger than $(1+\Delta) \alpha_i.$ } \label{fig: inter}
\end{figure}

\section{Main Results and Intuition}
\label{sec: main}
In this section, we present the main results of this paper along with the key intuition. We use the virtual channel idea proposed in \cite{YinYanSri_08} to analyze heuristically our system.

\begin{figure}[hbt]
\centering{\epsfig{file=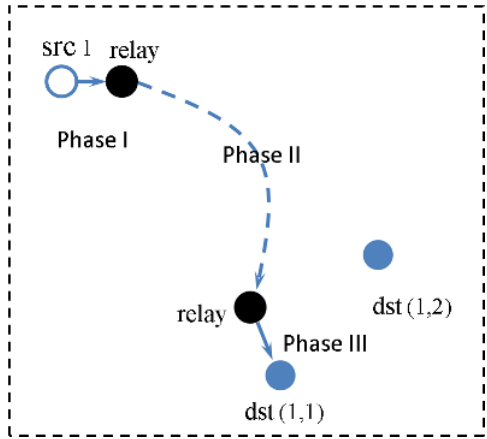,width=2.5in}}
\caption{The three phases of a typical delivery} \label{fig: 3phases}
\end{figure}

In general, a successful delivery consists of three phases (see Figure \ref{fig: 3phases}):
\begin{itemize}
\item \emph{Phase-I}, the packet is transmitted from the source to some relay node;

\item \emph{Phase-II}, the relay moves to the neighborhood of one of the $p$ destinations of the packet; and

\item \emph{Phase-III}, the relay sends the packet to its destination.
\end{itemize}
Each of these phases can be thought as a virtual channel as in Figure \ref{fig: vc}.

\begin{figure}[hbt]
\centering{\epsfig{file=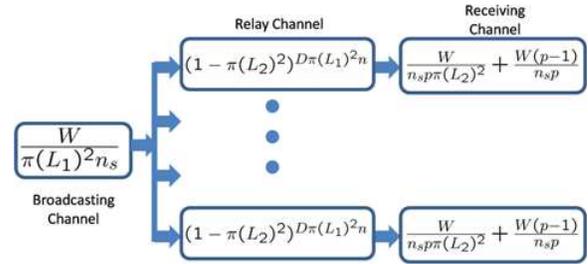,width=3.0in}}
\caption{The virtual channel representation of a multicast session} \label{fig: vc}
\end{figure}

\begin{itemize}
\item {\bf Reliable broadcasting channel:} To avoid interference, the exclusion regions of successful transmissions should be disjoint with each
other. To simplify our heuristic analysis, we assume all sources use a common transmission radius $L_1$ for sending out the information. We also assume each exclusion region has an area $\pi (L_1)^2.$ \footnote{Note these two assumptions, along with other assumptions introduced in this section, are for the purpose of a heuristic argument. Our results hold without these assumptions.} Here we omit the constant $\Delta$ for simplicity. Thus, the
number of simultaneous broadcasting at one time slot is at most $\frac{1}{\pi
(L_1)^2}.$ On average, each source has $P_1$ fraction of time to
transmit, where
$$P_1=\frac{1}{\pi (L_1)^2 n_s}.$$ Thus, the throughput of each broadcasting channel is $$\frac{W}{ \pi (L_1)^2 n_s}.$$ On average, each packet will be received by $\pi (L_1)^2 n$ nodes in the neighborhood, and has $\pi (L_1)^2 n$ duplicate copies in the network.

\item {\bf Unreliable relay channel (erasure channel):} We assume that all relays use a common transmission radius $L_2$ for sending packets to their destinations. The probability that a duplicated packets does not fall into the transmission range of a specific one of its $p$ destinations during $D$ consecutive time slots is $$P_{\hbox{\scriptsize miss}}=(1-\pi (L_2)^2)^{D}.$$ Recall that after sent out from the source, each source packet will have $\pi (L_1)^2 n$ copies. So the probability that none of the duplicated packets falls into the transmission ranges of the $p$ destinations during $D$ consecutive time slots is
$$P_{\hbox{\scriptsize miss2}}=(1-\pi (L_2)^2)^{D \pi (L_1)^2 n},$$ which is the erasure probability of the relay channel.

\item {\bf Reliable receiving channel:} Consider the transmissions from relays to destinations. When a packet is being transmitted from a relay, it is delivered to all the destinations in the transmission range of the relay. We name one of the deliveries as \emph{target delivery}, and the rest as \emph{free-ride deliveries}. Note that all exclusion regions associated with the successful targeted deliveries should be disjoint from each other. With a common transmission radius $L_2,$  a successful target-delivery associates an exclusion region with area $\pi (L_2)^2.$ So the number of target deliveries at one time slot is no more than $$\frac{W}{\pi (L_2)^2}.$$ Furthermore, along with each target delivery, there are   $$(p-1)\pi (L_2)^2$$ free-ride deliveries on average. Thus, we can expect $$\frac{W(1+(p-1)\pi (L_2)^2)}{\pi (L_2)^2}$$ deliveries at each time slot. Since the destinations belonging to the same multicast session request identical information, so the throughput per multicast session is
\begin{eqnarray*}
W\frac{1+(p-1) \pi (L_2)^2}{n_s p \pi (L_2)^2}&=&\frac{W}{n_s p \pi (L_2)^2}+\frac{W(p-1)}{n_s p }
\end{eqnarray*} bits per time slot.
\end{itemize}

Let $\lambda$ denote the multicast capacity, i.e., the maximum throughput per multicast session. Based on the virtual channel representation, we can conclude heuristically that
\begin{eqnarray*}\lambda&=&\max_{L_1, L_2} \min
\left\{\left(1-\left(1-\pi (L_2)^2\right)^{\pi D (L_1)^2 n}\right) \frac{W}{ \pi (L_1)^2 n_s},\right.\\
&& \left.\frac{W}{n_s p \pi (L_2)^2}+\frac{W(p-1)}{n_sp}\right\} \\
&=&\Theta\left(\sqrt{\frac{D }{n_s}}\right),
\end{eqnarray*}
where the transmission radii $L_1$ and $L_2$ solving the maximization are $L^*_1=\Theta\left(\frac{1}{\sqrt[2]{n_s}}\right)$ and $L^*_2=\Theta\left(\frac{1}{\sqrt[4]{ p^2 D n_s}}\right).$

We would like to comment that all analysis above is heuristic, which however captures the key properties determining the delay constrained multicast capacity. The rigorous analysis will be presented in the rest of the paper, where we will prove the following main results:

\noindent{\bf Main Result 1:} Given the delay constraint $D,$  the multicast capacity $\lambda$ (per multicast session) is \begin{eqnarray*}
\lambda = \left\{
            \begin{array}{l}
              0,  \hspace{0.5in} \hbox{ if } D=o\left(\sqrt[3]{\frac{n_s}{(\log p)^2(\log (n_s p))^2}}\right);\\
              \Theta(1),  \hspace{0.5in} \hbox{ if } D= \omega\left(\frac{n_s}{(\log p)^2(\log (n_s p))^2}\right);\\
              O\left((\log p)(\log (n_s p))\sqrt{\frac{D}{n_s}}\right),  \hbox{ otherwise }.
            \end{array}
          \right.
\end{eqnarray*}

\noindent{\bf Main Result 2:} There exists a joint coding/scheduling algorithm achieving a throughput of $\Theta\left(\sqrt{\frac{D}{n_s}}\right)$ when $D$ is both $\omega(\sqrt[3]{n_s}\log(n_sp))$ and $o(n_s).$

\section{Upper Bound}
\label{sec: upper}

In this section, we present an upper-bound on the mulitcast capacity of MANETs.  Note that multicast in MANETs is different from unicast in the following aspects:
\begin{itemize}
\item A mobile can send a packet to any of its $p$ destinations, which increases inter-contact rates.

\item When a packet is transmitted, it can be received by all the destinations in the transmission range, which increases the efficiency of the transmission.
\end{itemize}

Let $\Lambda_j[T]$ to be the number of bits that are delivered to destination $j$ before their deadlines expire, up to time $T.$ and $\Lambda[T]=\sum_{j} \Lambda_j[T].$ Furthermore, let $B[T]$ denote the bits delivered by target deliveries up to time $T.$

%The outline of this section is:
%\begin{enumerate}
%\item In Lemma \ref{lem: cluster},
%
%\item In Lemma \ref{lem: free},  we establish a fundamental relation between $\Lambda[T]$ and $B[T],$ and can be used to bounded the multicast capacity based on $B[T].$
%
%\item In Lemma \ref{lem: norelay}, we bound the multicast throughput without relaying; and in Lemma \ref{lem: relay}, we bound the multicast throughput with relaying. Both the lemmas will be established based on the fundamental relation obtained in Lemma \ref{lem: free}.
%
%\item In Theorem \ref{thm: upper}, we present an upper bound on the multicast capacity of MANETs under delay constraint $D.$
%\end{enumerate}

Note that in the multicast scenario, one transmission might lead to multiple successful deliveries when  the destinations belonging to the same multicast session are close to each other. We first show that the number of occasions that more than $\kappa(1+p \gamma^2)\log (n_s p)$ destinations belonging to the same sessions are in a disk with radius $\gamma$ is small. For a destination $j,$ we let $H(j,\gamma, t)$ denote the number of destinations that belong to the same multicast session as node $j$ and are within a distance of $\gamma$ from $j$ at time $t.$ We further define $$Z_{\gamma, \kappa}[T]=\sum_{t=1}^T \sum_j 1_{H(j, \gamma, t)\geq \kappa (1+p \gamma^2)\log (n_sp)}. $$

\begin{lemma} There exists $\kappa>0,$ independent of $n_s$ and $p,$ such that for any $\gamma\in(0, 1]$
\begin{eqnarray}
E[Z_{\gamma, \kappa}[T]] \leq \frac{T}{(n_s p)^2}
\end{eqnarray} holds.
\label{lem: cluster}
\end{lemma}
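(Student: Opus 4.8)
The plan is to reduce the claim to a single large–deviation estimate for one destination at one time slot, and then sum over the $n_s p$ destinations and $T$ slots by linearity of expectation. The additive ``$1$'' in the threshold $\kappa(1+p\gamma^2)\log(n_s p)$ is exactly the feature that makes one universal constant $\kappa$ work across all $\gamma$, so the write-up must be careful to exploit it rather than bound against $p\gamma^2$ alone.

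First I would fix a destination $j$ and a slot $t$ and identify the law of $H(j,\gamma,t)$. Under the 2D-i.i.d.\ model the $p-1$ other destinations of $j$'s session are placed independently and uniformly on the unit torus, independently of $j$'s own position. Conditioned on where $j$ sits, each such node lands within toroidal distance $\gamma$ independently with probability $q$ equal to the area of the radius-$\gamma$ disk, and this area satisfies $q\le \pi\gamma^2$ for every $\gamma\in(0,1]$ (wraparound can only decrease it). Since the torus is homogeneous, $q$ does not depend on $j$'s location, so $H(j,\gamma,t)$ is a $\mathrm{Binomial}(p-1,q)$ variable with mean $\mu=(p-1)q\le \pi p\gamma^2$, identical in distribution for all $j,t$ (if $j$ is counted as within its own neighborhood this only adds a harmless deterministic $+1$).

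Next, writing $L=\log(n_s p)$ and $a=\kappa(1+p\gamma^2)L$ for the threshold, I would invoke the standard multiplicative Chernoff bound $\Pr(X\ge a)\le (e\mu/a)^a$, valid whenever $a>\mu$. The key algebraic step, which holds \emph{uniformly} in $\gamma$, is
\[
\frac{e\mu}{a}\;\le\;\frac{e\pi\,p\gamma^2}{\kappa(1+p\gamma^2)L}\;\le\;\frac{e\pi}{\kappa L}
\qquad\text{and}\qquad a\;\ge\;\kappa L ,
\]
since $p\gamma^2/(1+p\gamma^2)\le 1$ and $1+p\gamma^2\ge 1$. Choosing $\kappa$ large enough that $e\pi/(\kappa L)\le e^{-1}$ — for which $\kappa\ge e^2\pi$ suffices as soon as $n_s p$ is large enough that $L\ge 1$, a choice independent of $n_s,p,\gamma$ — forces $a>\mu$ and gives
\[
\Pr\!\big(H(j,\gamma,t)\ge a\big)\le\Big(\tfrac{e\pi}{\kappa L}\Big)^{a}\le e^{-a}\le e^{-\kappa L}=(n_s p)^{-\kappa}\le (n_s p)^{-3}.
\]

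Finally I would assemble the pieces: as there are $n_s p$ destinations and the tail bound is uniform over $j,t$,
\[
E[Z_{\gamma,\kappa}[T]]=\sum_{t=1}^{T}\sum_{j}\Pr\!\big(H(j,\gamma,t)\ge a\big)\le T\cdot(n_s p)\cdot (n_s p)^{-3}=\frac{T}{(n_s p)^2},
\]
which is the claim. I do not expect a serious obstacle, since this is concentration followed by a union bound; the one point needing care is the uniformity in $\gamma$. One must resist bounding $q$ by $\pi\gamma^2$ and then comparing against a threshold proportional to $p\gamma^2$ alone, as that comparison degrades in the sparse regime $p\gamma^2=o(1)$. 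The additive constant in $1+p\gamma^2$, together with the resulting $a\ge\kappa L$, is precisely what keeps the tail below $(n_s p)^{-3}$ over the entire range $\gamma\in(0,1]$ with a single $\kappa$; a secondary detail is simply verifying $a>\mu$ so the Chernoff form applies, which follows from $\kappa>\pi$ and $L\ge 1$.
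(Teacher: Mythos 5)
Your proposal is correct and follows essentially the same route as the paper's Appendix A: identify $H(j,\gamma,t)$ as a Binomial$(p-1,\,q)$ variable with $q\le\pi\gamma^2$, apply the multiplicative Chernoff bound with a $\kappa$ chosen large enough (uniformly in $\gamma$, using $p\gamma^2/(1+p\gamma^2)\le 1$) to drive the tail below $(n_sp)^{-3}$, and finish with a union bound over the $n_sp$ destinations and $T$ slots. Your $(e\mu/a)^a$ form of Chernoff is equivalent to the paper's $\bigl(e^{\delta}/(1+\delta)^{1+\delta}\bigr)^{\mu}$ form, and your explicit treatment of uniformity in $\gamma$ mirrors the paper's observation that $(p-1)\pi\gamma^2/\bigl((1+p\gamma^2)\log(n_sp)\bigr)<\pi$.
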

\begin{proof}
The proof is presented in Appendix A.
\end{proof}

The next lemma establishes a fundamental connection between $B[T]$ and $\Lambda[T].$
\begin{lemma} The following inequality holds:
\begin{eqnarray*}
E[\Lambda[T]] \leq 5 \kappa \log ( n_s p) E\left[B[T]\right]+ \frac{16 \kappa W T}{\Delta^2} p (\log p) \log ( n_s p).
\end{eqnarray*}
\label{lem: free}
\end{lemma}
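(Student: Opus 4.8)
The plan is to charge every delivery counted by $\Lambda[T]$ to the unique target delivery that produced it, and then show that a \emph{typical} target delivery is accompanied by only $O\!\left(\kappa(1+p\alpha^2)\log(n_sp)\right)$ free-ride deliveries, where $\alpha$ is its transmission radius. Since each successful transmission carries $W$ bits, I would first write $\Lambda[T]\le W\sum_d N_d$, where $d$ ranges over all target deliveries and $N_d$ is the number of same-session destinations lying in the transmission disk of the transmission realizing $d$; correspondingly $B[T]=W\cdot\#\{d\}$. Because every such destination lies within the transmission radius $\alpha_d$ of the relay, it lies within $2\alpha_d$ of the target destination $j_d$, so that $N_d\le 1+H(j_d,2\alpha_d,t_d)$, with $t_d$ the slot of $d$. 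This reduces the lemma to controlling $W\sum_d\bigl[1+H(j_d,2\alpha_d,t_d)\bigr]$.

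Next I would decompose the target deliveries by a dyadic scale $\gamma_k=2^{-k}$ according to $2\alpha_d\in(\gamma_{k+1},\gamma_k]$, and split each scale into a \emph{good} part, where $H(j_d,2\alpha_d,t_d)\le 4\kappa(1+p(2\alpha_d)^2)\log(n_sp)$ with $\kappa$ the constant from Lemma~\ref{lem: cluster}, and a \emph{bad} part otherwise. The factor $4$ is chosen so that after rounding $2\alpha_d$ up to its scale $\gamma_k$ (note $\gamma_k<4\alpha_d$), a bad delivery forces $H(j_d,\gamma_k,t_d)>\kappa(1+p\gamma_k^2)\log(n_sp)$, i.e. its target contributes to the indicator defining $Z_{\gamma_k,\kappa}[T]$, so Lemma~\ref{lem: cluster} applies at its own constant $\kappa$. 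For the good part, each delivery is counted in exactly one scale, so there is no overcounting across scales. On scales with $p\gamma_k^2\le1$ the bracket is $O(\kappa\log(n_sp))$, and summing over these deliveries gives the first term $5\kappa\log(n_sp)\,B[T]$ after the constants are optimized. On scales with $p\gamma_k^2>1$ — of which there are only $O(\log p)$, since $\gamma_k$ then ranges in $(1/\sqrt p,1]$ — I would invoke exclusion-region packing: the disks of radius $\Delta\alpha_d/2$ are disjoint in the unit torus, so at most $O\!\left(1/(\Delta^2\gamma_k^2)\right)$ target deliveries of scale $k$ occur per slot, each accompanied by $O(\kappa p\gamma_k^2\log(n_sp))$ good deliveries. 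The factor $\gamma_k^2$ cancels, leaving $O\!\left(\kappa Wp\log(n_sp)/\Delta^2\right)$ per slot per scale; multiplying by the $O(\log p)$ scales and by $T$ slots produces the second term $\frac{16\kappa WT}{\Delta^2}p(\log p)\log(n_sp)$.

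The main obstacle is the \emph{bad} part, because a priori there are infinitely many dyadic scales and a naive sum of the $T/(n_sp)^2$ bounds would diverge. I would resolve this by truncating at $\gamma_k=1/(n_sp)$. For the $O(\log(n_sp))$ scales with $\gamma_k\ge 1/(n_sp)$, the number of bad target deliveries at scale $k$ is at most $Z_{\gamma_k,\kappa}[T]$ (each destination is the target of at most one delivery per slot), and each such delivery carries at most $p$ deliveries; hence Lemma~\ref{lem: cluster} bounds the expected bad contribution here by $Wp\cdot O(\log(n_sp))\cdot T/(n_sp)^2$, which is negligible against the second term. For the scales with $\gamma_k<1/(n_sp)$, a bad delivery requires two same-session destinations within distance $2/(n_sp)$; a direct first-moment (minimum-distance) estimate shows the expected number of such configurations over all sessions and slots is $O(T/n_s)$, so their contribution is $O(WpT/n_s)$, again negligible. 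Collecting the good and bad contributions yields the stated inequality; the only nontrivial bookkeeping is tracking the constants $5\kappa$ and $16\kappa$, which I would leave to the detailed calculation.
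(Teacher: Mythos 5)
Your proposal is correct and rests on the same pillars as the paper's proof: charging all deliveries to their target delivery, bounding free-rides by $H(j_d,2\alpha_d,t_d)$, a good/bad split calibrated to Lemma~\ref{lem: cluster}, the exclusion-region packing bound, and the scale-invariant cancellation $p\gamma_k^2\cdot\gamma_k^{-2}$ summed over the $O(\log p)$ dyadic scales above $1/\sqrt p$. The one genuine structural difference is at the bottom of the decomposition. The paper never decomposes below $\gamma=1/\sqrt p$: all deliveries with $\alpha_B<\gamma$ form a single base class, and the good/bad split for that class is applied at the \emph{fixed} radius $2\gamma$, so each target delivery invokes the Chernoff estimate from the proof of Lemma~\ref{lem: cluster} exactly once and the entire bad contribution is absorbed in one stroke into the negligible term $WT/(n_s^2p)$. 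You instead carry the dyadic decomposition to arbitrarily small radii, which creates the divergence problem you correctly identified; your patch --- truncating at $\gamma_k=1/(n_sp)$, applying Lemma~\ref{lem: cluster} on the $O(\log(n_sp))$ surviving scales, and a first-moment minimum-distance estimate ($O(T/n_s)$ close same-session pairs, hence $O(WpT/n_s)$ bits) for the tiny scales --- is valid, but it is extra machinery the paper's cutoff makes unnecessary. The cutoff also produces the stated constants for free: with $\gamma=1/\sqrt p$ the good bracket for the base class is $\kappa(1+4p\gamma^2)\log(n_sp)=5\kappa\log(n_sp)$, which is literally where the coefficient $5\kappa$ comes from, whereas your factor-$4$ threshold tested at each delivery's own radius yields $8\kappa\log(n_sp)$ on the small scales. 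That slack is immaterial for the order-wise use of the lemma (and vanishes if you instead test $H$ at the top of the scale, i.e.\ require $H(j_d,\gamma_k,t_d)\le\kappa(1+p\gamma_k^2)\log(n_sp)$), but as written your sketch proves the inequality with larger constants than the ones stated.
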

\begin{proof}
The proof is presented in Appendix B.
\end{proof}

Based on the lemma above, we further obtain the following two results, which characterize the delay constrained throughput without using relays and only using relays, respectively.
\begin{lemma}
Consider the 2D-i.i.d. mobility and the protocol model. Suppose that packets have to be directly transmitted from
sources to their destinations. Then, we have \begin{eqnarray}
E[\Lambda[T]] &\leq& 5 \kappa \log ( n_s p) \left(WT \sqrt{\frac{32}{\Delta^2}} \sqrt{n_s p}\right) \nonumber\\
&&+\frac{16 \kappa W T}{\Delta^2} p (\log p) \log ( n_s p).
\label{eq: norelay}
\end{eqnarray} \label{lem: norelay}
\end{lemma}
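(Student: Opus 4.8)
The plan is to reduce the claim to a single packing estimate by invoking Lemma~\ref{lem: free}. The second term in the asserted inequality is identical to the second term of Lemma~\ref{lem: free}, so it suffices to prove that $E[B[T]] \le WT\sqrt{32/\Delta^2}\sqrt{n_s p}$ in the no-relay regime, and then substitute. Since each successful transmission carries at most $W$ bits and is charged exactly one target delivery, we have the pathwise bound $B[T] \le W\sum_{t=1}^T N_t$, where $N_t$ is the number of target deliveries scheduled in slot $t$; the goal thus becomes an upper bound on $E[N_t]$ valid uniformly over every (possibly position-aware) schedule.

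Next I would bring in the interference geometry. Under the protocol model each target delivery $i$ uses a radius $\alpha_i$ and associates a disjoint exclusion disk of radius $\Delta\alpha_i/2$ at its receiver; since the torus has area $1$, disjointness forces $\sum_i \pi(\Delta\alpha_i/2)^2 \le 1$, i.e. $\sum_i \alpha_i^2 \le 4/(\pi\Delta^2)$. When packets go directly from a source to one of its own destinations, $\alpha_i$ is at least the source--destination distance $d_i$, so the pairs used in any slot satisfy $\sum_i d_i^2 \le 4/(\pi\Delta^2)$. The key step converts this $\ell_2$ budget into a count: for an arbitrary threshold $\alpha$, the used pairs with $d_i > \alpha$ each consume more than $\alpha^2$ of the budget, so fewer than $4/(\pi\Delta^2\alpha^2)$ of them exist, while the used pairs with $d_i \le \alpha$ number at most $X_\alpha$, the total number of source--destination pairs within distance $\alpha$ at time $t$. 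This yields the deterministic bound $N_t \le X_\alpha + 4/(\pi\Delta^2\alpha^2)$ for every realization of the positions.

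Taking expectations and using that two independent uniform points on the unit torus lie within distance $\alpha \le 1/2$ with probability exactly $\pi\alpha^2$, so $E[X_\alpha] = (n_s p)\pi\alpha^2$, gives $E[N_t] \le (n_s p)\pi\alpha^2 + 4/(\pi\Delta^2\alpha^2)$. Balancing the two terms at $\alpha^2 = 2/(\pi\Delta\sqrt{n_s p})$ (which is below $1/2$ for large $n_s p$, so the clean torus formula applies) produces $E[N_t] \le (4/\Delta)\sqrt{n_s p} \le \sqrt{32/\Delta^2}\sqrt{n_s p}$. Summing over $t = 1,\dots,T$ bounds $E[B[T]]$, and feeding this into Lemma~\ref{lem: free} gives the stated inequality with room to spare in the constant.

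The step I expect to be the main obstacle is the decoupling between the adversarial scheduler and the random geometry: the bound on $N_t$ must hold for the best schedule the network could possibly use, so the count estimate $N_t \le X_\alpha + 4/(\pi\Delta^2\alpha^2)$ has to be a pathwise consequence of the packing budget \emph{before} any expectation is taken, after which only the benign quantity $X_\alpha$ is averaged. Making the threshold split schedule-independent, and verifying that the optimizing $\alpha$ stays below $1/2$ so that $E[X_\alpha]=(n_s p)\pi\alpha^2$ holds exactly, are the two places requiring care; the remaining optimization and the absorption of the constant into $\sqrt{32/\Delta^2}$ are routine.
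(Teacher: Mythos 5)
Your proposal is correct, and it reaches the paper's bound with the same two building blocks but through a genuinely different mechanism. Like the paper, you reduce the lemma via Lemma~\ref{lem: free} to the single estimate $E[B[T]] \le WT\sqrt{32/\Delta^2}\sqrt{n_s p}$, and like the paper you combine the exclusion-region packing constraint with the probability that a source sits close to one of its destinations, balancing the two through a distance threshold. The difference is in how the packing constraint is exploited. The paper keeps the time-aggregated inequality (\ref{eq: c3_d}), applies Cauchy--Schwarz to get $\bigl(\sum_B \alpha_B\bigr)^2 \le B[T]\cdot 4WT/(\pi\Delta^2)$, lower-bounds $\sum_B \alpha_B$ by $L\sum_B 1_{\alpha_B > L}$, passes to expectations via Jensen's inequality, and then resolves the resulting implicit inequality in $E[B[T]]$ by the self-referential choice $L=\sqrt{E[B[T]]/(2WTn_s p\pi)}$. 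You instead work slot by slot: disjointness of the exclusion disks gives the pathwise budget $\sum_i \alpha_i^2 \le 4/(\pi\Delta^2)$ within each slot, a Markov-type count then bounds the number of long transmissions by $4/(\pi\Delta^2\alpha^2)$, short ones are counted by the number $X_\alpha$ of close source--destination pairs, and only the schedule-independent quantity $X_\alpha$ is averaged before the threshold is optimized explicitly, giving $E[N_t]\le (4/\Delta)\sqrt{n_s p}$ per slot. Your route avoids Cauchy--Schwarz, Jensen, and the implicit optimization entirely, makes the decoupling from a position-aware scheduler transparent (the split is pathwise, as you correctly emphasize), and even yields a slightly sharper constant ($4/\Delta$ versus $\sqrt{32}/\Delta$). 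What the paper's formulation buys in exchange is that (\ref{eq: c3_d}) is imported wholesale from prior work, so the per-slot disk-packing derivation (including the mild technicality that the area formula $\pi(\Delta\alpha_i/2)^2$ on the torus requires the exclusion radius to be at most $1/2$, an assumption your derivation shares with the paper's cited lemma) never has to be rederived, and the identical Cauchy--Schwarz template is reused verbatim in Appendix~D for the relay case, whereas your per-slot counting would need a separate adaptation there.
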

\begin{proof}
The proof is presented in Appendix C.
\end{proof}

%Next we present an upper bound on the number of bits delivered from relays to destinations.
\begin{lemma}
Consider the 2D-i.i.d. mobility. Suppose that packets have to be transmitted from
relays to their destinations. Then, we have \begin{eqnarray}
E[\Lambda[T]] &\leq& 5 \kappa \log (n_s p) \left(\sqrt{\frac{32}{\Delta^2}}W T (p+1) \sqrt{n_s D }\right) \nonumber\\
&& + \frac{16 \kappa W T}{\Delta^2} p (\log p) \log ( n_s p).
\label{eq: relay}
\end{eqnarray} \label{lem: relay}
\end{lemma}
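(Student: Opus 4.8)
The plan is to reduce the statement to a single bound on $E[B[T]]$ and then invoke Lemma~\ref{lem: free}. Comparing the claimed inequality with Lemma~\ref{lem: free}, it suffices to prove that when every delivery is a relay-to-destination transmission,
$$E[B[T]] \le \sqrt{\tfrac{32}{\Delta^2}}\, W T (p+1)\sqrt{n_s D};$$
substituting this into $E[\Lambda[T]]\le 5\kappa\log(n_sp)E[B[T]] + \tfrac{16\kappa W T}{\Delta^2}p(\log p)\log(n_sp)$ immediately yields the lemma. Since each target delivery carries at most $W$ bits, it is enough to show that the expected number $E[N_t]$ of target deliveries in any single slot is at most $\sqrt{32/\Delta^2}\,(p+1)\sqrt{n_s D}$. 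This mirrors the direct-transmission bound of Lemma~\ref{lem: norelay}, except that the count of candidate source-destination pairs $n_sp$ is replaced by a delay-dependent quantity.

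First I would exploit the packing of target deliveries exactly as in the direct case. The exclusion regions of the target deliveries in a slot are disjoint disks of radius $\Delta\alpha_i/2$ on the unit torus, so $\sum_i\alpha_i^2\le 4/(\pi\Delta^2)$. Fixing a threshold $\rho$ to be optimized, I split the deliveries into those with radius $\alpha_i>\rho$ and those with $\alpha_i\le\rho$. The former are deterministically at most $4/(\pi\Delta^2\rho^2)$ by the area bound. Each delivery of the latter type has its relay within distance $\rho$ of a destination of the packet it carries, and that relay holds a packet whose deadline has not yet expired; hence the number of such deliveries is at most the number of pairs $(r,j)$ with $r$ within $\rho$ of destination $j$ at time $t$ and $r$ holding a still-valid packet for $j$'s session. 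Taking expectations and using that, under the 2D-i.i.d. model, the time-$t$ proximity event (probability $\pi\rho^2$) is independent of the freshness event (which is determined by positions strictly before $t$), this term factors as $\pi\rho^2$ times the expected number of fresh relay-destination incidences.

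The crux is to bound the expected number of fresh relay-session incidences, i.e. to show the expected number of relays holding a within-deadline packet for a given session is $O((p+1)D)$, equivalently that a fixed relay is fresh for a fixed session with probability $O(D/n_s)$. A relay can be fresh for session $k$ only if it lay within the (algorithm-chosen, possibly large) transmission range of the source during the trailing window of $D$ slots; a union bound over the $D$ slots gives reach proportional to $D$ times the per-slot broadcast reach, but the per-slot reach grows with the broadcast radius, which is a free parameter here. Controlling this uniformly over all algorithms is the main obstacle. I expect to resolve it by a joint space-time budget argument: the source broadcasts also occupy disjoint exclusion regions in each slot, so over any window of $D$ slots the aggregate broadcast reach-area available to the $n_s$ sources is bounded, and trading broadcast reach against delivery opportunities caps the effective number of fresh incidences at $O((p+1)^2 n_s D\rho^2)$. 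The delay constraint is precisely what makes this finite, since a packet is useful for only $D$ slots and therefore the pool of live copies is governed by injection over a sliding window of length $D$ rather than over all of $[0,T]$.

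Finally I would optimize the threshold. Writing $E[N_t]\le C\rho^2 + 4/(\pi\Delta^2\rho^2)$ with $C=\Theta((p+1)^2 n_s D)$ and balancing the two terms at $\rho^2=\Theta\!\left(1/(\Delta\sqrt{(p+1)^2 n_s D})\right)$ gives $E[N_t]\le\sqrt{32/\Delta^2}\,(p+1)\sqrt{n_s D}$. Multiplying by $W$ and summing over the $T$ slots bounds $E[B[T]]$, and substitution into Lemma~\ref{lem: free} completes the proof. The only genuinely delicate step is the fresh-incidence bound of the previous paragraph; the packing, the independence factorization, and the optimization are all routine once that is in hand.
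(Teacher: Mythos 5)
Your overall architecture matches the paper's: reduce the lemma to a bound $E[B[T]]\le \sqrt{32/\Delta^2}\,WT(p+1)\sqrt{n_s D}$ and substitute into Lemma~\ref{lem: free}, then handle large-radius deliveries by disjointness of exclusion regions and optimize a radius threshold (the paper does this packing step in aggregate over $T$ slots via Cauchy-Schwarz applied to inequality~(\ref{eq: c3_d}) rather than slot by slot, but that difference is cosmetic). The genuine gap is exactly the step you flag as delicate and leave as a sketch: the bound on small-radius deliveries, which you reduce to a ``fresh-incidence'' count and propose to control by a space-time budget on broadcast exclusion regions. As sketched, this does not go through. First, broadcast radii and transmitter choices can be made \emph{after} observing the current node positions, so the number of nodes reached by a broadcast of radius $\alpha$ is not simply $n\pi\alpha^2$ in expectation; controlling precisely this kind of adaptive clustering is what costs the paper its $\log(n_sp)$ factors elsewhere (Lemma~\ref{lem: cluster}), and your clean bound $O((p+1)^2 n_s D\rho^2)$ ignores it. Second, your per-session form of the claim (``a fixed relay is fresh for a fixed session with probability $O(D/n_s)$'') is false: a single source may blanket the torus every slot and make all $n$ nodes fresh for its session; only an aggregate over sessions can be bounded. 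Third, even granting the aggregate budget, it carries an extra $1/\Delta^2$ that would not reproduce the stated constant $\sqrt{32/\Delta^2}$.

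The paper closes this hole with a much simpler accounting that never reasons about broadcast radii at all. It counts \emph{bits}, not per-slot incidences: the pool ${\cal R}[T]$ of bits ever handed to relays satisfies $|{\cal R}[T]|\le n_s(p+1)WT$ (inequality~(\ref{eq: c2_d}), a pure reception-capacity bound that holds for any algorithm), and for each such bit $b$, letting $H(b)$ be the minimum distance between its relay and any of its $p$ destinations over the bit's $D$-slot lifetime, 2D-i.i.d. mobility gives $\Pr(H(b)\le L)\le 1-(1-\pi L^2)^{Dp}\le \pi L^2 Dp$. Since a target delivery with $\alpha_B\le L$ forces $H(B)\le L$, this yields $E\left[\sum_{B} 1_{\alpha_B\le L}\right]\le n_s(p+1)WT\pi L^2 Dp$; combining with the Cauchy-Schwarz consequence of (\ref{eq: c3_d}) and choosing $L=\sqrt{E[B[T]]/(2WT\pi n_s(p+1)pD)}$ gives the desired bound on $E[B[T]]$. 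Note how the roles are swapped relative to your plan: the paper uses the delay constraint only inside the per-bit proximity probability ($Dp$ slots-destination pairs), and caps the pool by total reception capacity over $[0,T]$, whereas you tried to use the delay constraint to cap the pool itself---which is what forced you into the unproved broadcast-budget argument. If you replace that argument with the bit-level counting above, the rest of your proof is sound.
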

\begin{proof}
The proof is presented in Appendix D.
\end{proof}
The next theorem presents an upper bound on the delay constrained multicast capacity.
\begin{thm} The delay constrained multicast capacity under the 2D-i.i.d. mobility and protocol model is
\begin{eqnarray}
\lambda = \left\{
            \begin{array}{l}
              0,  \hspace{0.5in} \hbox{ if } D=o\left(\sqrt[3]{\frac{n_s}{(\log p)^2(\log (n_s p))^2}}\right);\label{eq: case 1}\\
              \Theta(1),  \hspace{0.5in} \hbox{ if } D= \omega\left(\frac{n_s}{(\log p)^2(\log (n_s p))^2}\right);\label{eq: case 2}\\
              O\left((\log p)(\log (n_s p))\sqrt{\frac{D}{n_s}}\right),  \hbox{ otherwise }\label{eq: case 3}.
            \end{array}
          \right.
\end{eqnarray} \label{thm: upper}
\end{thm}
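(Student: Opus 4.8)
The plan is to convert the bounds on the expected number of delivered bits $E[\Lambda[T]]$ from Lemmas~\ref{lem: norelay} and~\ref{lem: relay} into a bound on the per-session throughput $\lambda$, and then read off the three regimes. The bridge between $\Lambda[T]$ and $\lambda$ is the observation that, for a sustainable per-session rate $\lambda$, each of the $n_s p$ destinations must receive its session's information at rate $\lambda$ before the deadline; hence $\Lambda[T]=\sum_j \Lambda_j[T]\geq \lambda\, n_s p\, T$, so that
\begin{equation*}
\lambda \leq \frac{E[\Lambda[T]]}{n_s p\, T}.
\end{equation*}
It therefore suffices to upper bound $E[\Lambda[T]]$ and divide by $n_s p\,T$.

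First I would decompose each delivered bit according to whether it reached its destination directly from the source or through a relay, writing $\Lambda[T]=\Lambda^{(d)}[T]+\Lambda^{(r)}[T]$. The direct part is controlled by Lemma~\ref{lem: norelay} and the relayed part by Lemma~\ref{lem: relay}, so $E[\Lambda[T]]$ is at most the sum of the two right-hand sides. Dividing by $n_s p\,T$ and using $(p+1)/p=\Theta(1)$ yields
\begin{equation*}
\lambda = O\!\left(\frac{\log(n_s p)}{\sqrt{n_s p}} + \log(n_s p)\sqrt{\tfrac{D}{n_s}} + \frac{(\log p)\log(n_s p)}{n_s}\right).
\end{equation*}
The next step is to show the claimed term dominates. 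Since $D\geq 1$ and $p\geq 1$, the relayed term $\log(n_s p)\sqrt{D/n_s}$ is at least the direct term $\log(n_s p)/\sqrt{n_s p}$ (their ratio is $\sqrt{Dp}\geq 1$), while the free-ride (multiple-delivery) term $\tfrac{(\log p)\log(n_s p)}{n_s}$ is at most $(\log p)\log(n_s p)\sqrt{D/n_s}$ because $D\geq 1/n_s$. Absorbing the factor $\log p\geq 1$ into the relayed term then gives the third case: $\lambda=O\!\left((\log p)(\log(n_s p))\sqrt{D/n_s}\right)$.

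The two boundary regimes follow from this bound together with a trivial cap. Because under the protocol model a destination can be the receiver of at most one successful transmission per slot, it receives at most $W$ bits per slot, so $\lambda=O(1)$ unconditionally; this supplies the upper bound in the second case, with the matching $\Omega(1)$ coming from the coding/scheduling algorithm of Main Result~2. For the first case, $D=o\!\left(\sqrt[3]{n_s/((\log p)^2(\log(n_s p))^2)}\right)$ implies in particular $D=o\!\left(n_s/((\log p)^2(\log(n_s p))^2)\right)$, so the third-case bound is $o(1)$ and $\lambda\to 0$.

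I expect the genuinely delicate points at the level of the theorem to be the bookkeeping rather than any single estimate, since the heavy lifting already lives in Lemmas~\ref{lem: norelay} and~\ref{lem: relay} (fed by the clustering control of Lemma~\ref{lem: cluster} through Lemma~\ref{lem: free}). Specifically, the two items to get right are: (i) justifying that the direct/relay decomposition lets us simply add the two lemma bounds, so that the relayed contribution is the binding one; and (ii) the regime analysis, in particular confirming that the free-ride term, which is exactly what the extra $\log p$ factor in the statement absorbs, never dominates in the intermediate regime, and that the stated cube-root threshold in the first case is strong enough (indeed more than enough) to drive the bound to zero, the precise cube-root level being dictated by where the achievability side ceases to deliver positive rate.
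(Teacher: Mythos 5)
Your treatment of the intermediate regime (case 3) and of the constant cap (case 2) matches the paper's own proof: the paper likewise combines the bounds of Lemmas \ref{lem: norelay} and \ref{lem: relay}, observes that the relay term dominates the direct term, divides by $n_s p T$, and caps the rate at $W$ bits per slot, deferring the $\Omega(1)$ achievability to the algorithm. The genuine gap is in the first case. Your argument --- the cube-root condition implies $D = o\left(n_s/((\log p)^2(\log (n_s p))^2)\right)$, hence the case-3 bound is $o(1)$, hence ``$\lambda \to 0$'' --- proves only that $\lambda = o(1)$, which is what already holds throughout the entire sub-linear regime covered by case 3. It does not prove the asserted conclusion $\lambda = 0$, and it cannot explain why the threshold sits at a cube root rather than at the linear level: under your reasoning every $D$ below the linear threshold would be treated identically, so the first case would be indistinguishable from the third.

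The missing idea is to multiply the rate bound by the deadline and invoke the atomicity of a bit. Under delay constraint $D$, the amount of information a session can deliver within one deadline window is at most $D\lambda$, and the case-3 bound gives $D\lambda = O\left(D^{3/2}(\log p)(\log (n_s p))/\sqrt{n_s}\right)$, which is $o(1)$ precisely when $D = o\left(\sqrt[3]{n_s/((\log p)^2(\log (n_s p))^2)}\right)$ --- this is exactly where the cube root comes from. Since the paper takes a bit to be the smallest quantity of information, being able to deliver strictly less than one bit within every deadline window means delivering nothing at all, so $\lambda = 0$ exactly, not merely asymptotically. Your closing remark attributing the cube-root level to ``where the achievability side ceases to deliver positive rate'' misreads the theorem: the threshold is produced by this upper-bound atomicity argument, not by the algorithm of Main Result 2, whose own threshold $D = \omega(\sqrt[3]{n_s}\log(n_s p))$ is a different expression.
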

\begin{proof}
From Lemma \ref{lem: norelay} and Lemma \ref{lem: relay}, we can see that the throughput by using relay dominates the throughput without relay, which implies that the delay constrained multicast capacity satisfies:
\begin{eqnarray*}
E[\Lambda[T]] &\leq& 5 \kappa \log (n_s p) \sqrt{\frac{32}{\Delta^2}}W T \left((p+1)\sqrt{n_s D } + \sqrt{n_s p}\right) \\
&& + \frac{16 \kappa W T}{\Delta^2} p (\log p) \log ( n_s p)\\
&=&O\left(n_s p T(\log p)(\log (n_s p))\sqrt{\frac{D}{n_s}}\right),
\end{eqnarray*} which leads to the last case.

Note when $D= \omega\left(\frac{n_s}{(\log p)^2(\log (n_s p))^2}\right),$  it can be easily verified that
$(\log p)(\log (n_s p))\sqrt{\frac{D}{n_s}}=\omega(1).$ However, each source can send out at most $W$ bits per time-slot, so $\lambda\leq W,$ which leads to the second case.

Next when $D=o\left(\sqrt[3]{\frac{n_s}{(\log p)^2(\log (n_s p))^2}}\right),$ it is easy to verify that $D\lambda=o(1).$ This means under the delay constraint $D,$ the information can be transmitted is less than one bit. We assume bit is the smallest quantity for information, so the capacity is zero in this case.
\end{proof}

\section{Joint coding-scheduling algorithm}
\label{sec: algo}

In this section, we propose new algorithms that almost achieve the upper bound obtained in the previous section. We can two different cases: $n_s=\Theta(1)$ and $n_s=\omega(1).$ For the first case, we can use simple round-robin scheduling algorithm to achieve the maximum throughput. For the second case, we introduce a joint coding-scheduling algorithm that leverages erasure-codes and yields a throughput very close to the upper bound.

\subsection{Case 1: $n_s=\Theta(1)$}
When $n_s=\Theta(1),$ a simple scheme is to let the sources broadcast their packets to all the mobiles in the network in a round-robin fashion. It is easy to see that the throughput in this case is $\Theta(1)$ per multicast session.

\subsection{Case 2: $n_s=\omega(1)$}
%\begin{figure}[hbt]
%\centering{\epsfig{file=Figures/wireless.eps,width=1.5in}}
%\caption{A single hop wireless network with two receivers. The channels are erasure channels with erasure probability $q.$ } \label{fig: wireless}
%\end{figure}

To approach the upper bound obtained in Theorem \ref{thm: upper}. In this subsection, we propose a scheme which exploits coding. This scheme achieves a significantly larger throughput than those without coding.
%To see this, consider a multicast scenario in a single hop wireless network as shown in Figure \ref{fig: wireless}. The average number of transmissions required for both of the two receivers receiving the packet is $\frac{2}{1-q}-\frac{1}{1-q^2}$  i.e., the throughput is $$\frac{1}{\frac{2}{1-q}-\frac{1}{1-q^2}}.$$ However, if we use rate-less codes, the multicast throughput we can achieve is $\frac{1}{1-q},$ which is significantly higher.

To exploit coding to approach the delay constrained multicast capacity, in our algorithm, we code data packets into coded packets using rate-less codes --- Raptor codes \cite{Sho_04}. Assume that $Q$ data packets are coded using the Raptor codes. The receiver can recover the $Q$ data packets with a high probability after it receives any $(1+\delta)Q$ distinct coded packets \cite{Sho_04}.

We use a modified two-hop algorithm introduced in \cite{GroTse_01}, which consists two major phases --- broadcasting and receiving.  At the broadcasting phase, we partition the unit torus into square cells (broadcasting cells) with each side of length equal to $1/\sqrt{n_s},$ which is of the same order as the optimal $L_1^*.$ All sources use a transmission radius $\sqrt{2}/\sqrt{n_s}$ in the broadcasting phase. To avoid interference caused by transmissions in neighboring cells, the cells are scheduled according to the cell scheduling algorithm introduced in  \cite{GupKum_00} so that each cell can transmit for a constant fraction of time during each time slot, and concurrent transmissions do not cause interference. We assume each cell can support a transmission of two packets during each time slot. In the receiving step, the unit square is divided into square cells (receiving cells) with
each side of length equal to $1/\sqrt[4]{n_s p^2 D}.$ The transmission radius used in this phase is $\sqrt{2}/\sqrt[4]{p^2 n_s D }.$

%The mean number of nodes in each will be denoted
%by $M_2$ and is equal to $\sqrt{\frac{n_s}{D}}.$ In the receiving step, the transmission radius of
%each nodes is chosen to be $\sqrt{2}\sqrt[4]{n_s p^2 D}.$ Note that $M_1M_2D/(n_s (p+1)) = 1.$

Similar as in \cite{YinYanSri_08}, we define four classes of packets in the network:
We also define and categorize packets into four different types.
\begin{itemize}
\item Data packets: uncoded data packets.

\item Coded packets: Packets generated by Raptor codes.

\item Duplicate packets: Each coded packet could be broadcast to other nodes to generate multiple
copies, called duplicate packets.

\item Deliverable packets: Duplicate packets that are in the same destination with one of its destinations.
\end{itemize}

\noindent{\bf Joint Coding-Scheduling Algorithm:} We group every $2D$ time slots into a
supertime slot. At each supertime slot, the nodes transmit packets as follows:

\begin{enumerate}
\item[(1)] {\bf Raptor encoding:} Each source takes $\frac{D}{500}\sqrt{D/n_s}$ data packets, and uses Raptor codes to
generate $D$ coded packets.

\item[(2)]  {\bf Broadcasting:} This step consists of $D$ time slots. At each time slot, in each cell, one source is randomly selected to broadcast a coded packet to $9(p+1)/10$ mobiles in the cell (the packet is sent to all mobiles in the cell if the number of mobiles in the cell is less than $9(p+1)/10$).

\item[(3)] {\bf Deletion:} After the broadcasting phase, all nodes check the duplicate packets they have. If more than one duplicate packet belongs to the same multicast session, randomly keep one and drop the others.

\item[(4)]{\bf Receiving:} This step requires $D$ time slots. At each time slot, if a cell
contains no more than two deliverable packets, the deliverable packets are broadcast in the cell; otherwise, no node in the cell attempts to transmit. At the end of this step, all undelivered packets are dropped. The destinations decode the received coded packets
using Raptor decoding.
\end{enumerate}

\begin{thm}\label{thm: th_s}
Suppose $D$ is both $\omega(\sqrt[3]{n_s}\log(n_sp))$ and $o(n_s),$ and the
delay constraint is $D.$ For sufficiently large $n_s,$ at the end of each super time slot, every source successfully transmits
$$\frac{D}{500}\sqrt{\frac{D}{n_s}}$$ packets to all $p$ destinations with a probability $1-\frac{1}{n_sp}.$
\end{thm}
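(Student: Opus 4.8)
The plan is to reduce the theorem to a single counting statement and then control each phase of the algorithm by concentration inequalities. By the decoding guarantee of Raptor codes, a destination recovers the $Q := \frac{D}{500}\sqrt{D/n_s}$ data packets of its session as soon as it has received any $(1+\delta)Q$ \emph{distinct} coded packets. Since $D=o(n_s)$ forces $\sqrt{D/n_s}\to 0$, one checks that $(1+\delta)Q < D$, so the $D$ coded packets generated in Step~(1) are in principle more than enough. Hence it suffices to prove that, for every one of the $n_s p$ source--destination pairs, the destination collects at least $(1+\delta)Q$ distinct coded packets by the end of the receiving phase, with failure probability at most $1/(n_s p)^2$; a union bound then yields the stated $1-1/(n_s p)$.

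First I would analyze the broadcasting and deletion phases. The number of mobiles in a broadcasting cell is $\mathrm{Binomial}(n,1/n_s)$ with mean $p+1$, and the number of sources sharing a cell is $O(1)$ in expectation, so a Chernoff bound shows that, w.h.p., every cell holds $\Theta(p)$ mobiles and every source is selected to transmit in at least a constant fraction (say $1/2$) of the $D$ slots. Consequently each source disseminates $\Omega(D)$ distinct coded packets, each reaching $\Theta(p)$ relays. For deletion, the key observation is that the expected number of session-$s$ packets received by any single relay over the $D$ slots is $\Theta(D/n_s)=o(1)$; thus collisions are rare, deletion discards only a vanishing fraction, and $\Omega(D)$ distinct coded packets of each session remain alive, spread over $\Omega(Dp)$ relay-copies.

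Next I would turn to the receiving phase, which I expect to be the crux. Fix a destination $d$ of session $s$; the receiving-cell area is $a_2 = 1/\sqrt{n_s p^2 D}$. The expected number of encounters over the $D$ receiving slots between $d$ and relays holding distinct session-$s$ coded packets is of order $Dp\cdot(D a_2) = D\sqrt{D/n_s} = 500\,Q$, a factor $\approx 500$ more than the $(1+\delta)Q$ we need. The difficulty is the gating rule: a cell transmits only when it contains at most two deliverable packets. A direct computation shows the expected number of deliverable packets in any cell is $\Theta(1)$ (summing $\Theta(Dp^2 a_2^2)=\Theta(1/n_s)$ over $n_s$ sessions), so a constant fraction of $d$'s encounters fall in uncongested cells; the main technical work is to make this rigorous despite the dependence between the event ``$d$'s desired relay lies in the cell'' and the event ``the cell is uncongested,'' which I would handle by conditioning on $d$'s position and dominating the remaining deliverable count by an independent Poisson-type variable. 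A coupon-collector estimate then shows that, since we sample only $o(D)$ relay-copies (using $D=o(n_s)$) from $\Omega(D)$ distinct types, almost all delivered packets are distinct, so $\Omega(Q)$ distinct coded packets arrive; the constants (the $1/500$ in Step~(1), the $9/10$ broadcast fraction, the $1/2$ selection rate, and the constant success fraction from the gating) can be fixed so that this exceeds $(1+\delta)Q$.

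Finally, the concentration bounds must each deliver per-pair failure probability $1/(n_s p)^2$, and this is exactly where the hypothesis $D=\omega(\sqrt[3]{n_s}\log(n_s p))$ enters: it forces $Q=\omega\big((\log(n_s p))^{3/2}\big)\gg \log(n_s p)$, so every Chernoff bound across the three phases has failure probability $e^{-\Omega(Q)}\ll 1/(n_s p)^2$, while $D=o(n_s)$ keeps the occupancy, collision, and sampling-distinctness probabilities small throughout. Taking a union bound over all $n_s p$ destinations and invoking the Raptor decoding guarantee completes the argument. The hardest step, as noted, is the rigorous treatment of the congestion gating in the receiving phase together with the distinctness count, since both rest on dependent events that must be bootstrapped from the same slackful expectation estimate.
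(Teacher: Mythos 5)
Your proposal is correct and follows essentially the same route as the paper's proof: a phase-by-phase analysis (broadcast concentration, deletion survival, receiving-phase delivery under the congestion gate) followed by a union bound over all $n_s p$ session--destination pairs, with the hypothesis $D=\omega(\sqrt[3]{n_s}\log(n_s p))$ entering exactly where you place it, namely in the receiving-phase Chernoff exponent of order $D\sqrt{D/n_s}$. The only differences are low-level tools: the paper handles the gating dependence by adding the cell's occupants one mobile at a time (bounding the chance that the $K$-th added mobile creates a new deliverable packet by $2KDp/(n_s(p+1)-K)$) and counts distinct deliveries via a balls-and-bins lemma, where you propose Poisson-type domination and a coupon-collector estimate --- equivalent in substance.
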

\begin{proof}
We follow the analysis in \cite{YinYanSri_08} to prove the following three steps:
\begin{enumerate}
\item[(1)] {\bf Step 1:} During the broadcasting step, with a high probability, a source sends out $\frac{D}{3}$ coded packets;

\item[(2)] {\bf Step 2:} After the deletion step, with a high probability, a source has at least $\frac{2D}{15}$ coded packets that each of them has more than $\frac{4p}{5}$ duplicate copies in the network.

\item[(3)] {\bf Step 3:} Each destination receives more than $\frac{D}{400}\sqrt{\frac{D}{n_s}}$ distinct coded packets after the broadcasting, which guarantees that it can decode the original $\frac{D}{500}\sqrt{\frac{D}{n_s}}$ data packets with a high probability.\\
\end{enumerate}

%Note although the idea is similar to that in \cite{YinYanSri_08}, the calculation is very different because of the multicast nature of the traffic flows. So the result in \cite{YinYanSri_08} cannot be applied directly here. The details of the proof are presented in Appendix B.

%To prove Theorem \ref{thm: th_s}, we first introduce several important results that will be used in our proof. These results have been presented and proved in \cite{YinYanSri_08}, so the proofs are omitted here.
%
%\begin{lemma}
%Assume we have $m$ bins. At each time, choose $h$ bins and drop one ball in each of them. Repeat
%this $n$ times. Using $N_1$ to denote the number of bins containing at least one ball, the
%following inequality holds for sufficiently large $n.$
%\begin{eqnarray}
%\hbox{Pr}\left(N_1\leq (1-\delta) m\tilde{p}_1 \right)&\leq& 2e^{-\delta^2
%m\tilde{p}_1/3}.\label{eq: BB_2}
%\end{eqnarray}
%where $\tilde{p}_1=1-e^{-\frac{nh}{m}}.$ \label{lem: ball-bin_2}
%\end{lemma}
%\rightline{$\square$}
%
%\begin{lemma}
%Suppose $n$ balls are independently dropped into $m$ bins and one trash can. After a ball is
%dropped, the probability in the trash can is $1-p,$ and the probability in a specific bin is
%$p/m.$ Using $N_2$ to denote the number of bins containing at least $1$ ball, the following
%inequality holds for sufficiently large $n.$
%\begin{eqnarray}
%\hbox{Pr}\left(N_2\leq (1-\delta) m \tilde{p}_2\right)&\leq& 2e^{-\delta^2 m
%\tilde{p}_2/3};\label{eq: BB_1}
%\end{eqnarray}
%where $\tilde{p}_2=1-e^{-\frac{np}{m}}.$ \label{lem: ball-bin}
%\end{lemma}
%\rightline{$\square$}
\noindent{\bf Analysis of step 1}

Let ${\cal B}_i[t]$ denote the event that node $i$ broadcasts a
coded packet to $9(p+1)/10$ mobiles at time slot $t.$ According to the definition of ${\cal B}_i[t],$ we have that
\begin{eqnarray*}
\Pr\left({{\cal B}_i[t]}\right)&=& \Pr\left(\hbox{$\geq 9p/10$ mobiles in the cell}\right)\\
&&\cdot \Pr\left(\hbox{$i$ is selected}|\hbox{ $\geq 9p/10$ mobiles in the cell}\right)\\
&\geq& \Pr\left(\hbox{$\geq 9p/10$ destinations in the cell}\right)\\
&&\cdot \Pr\left(\hbox{$i$ is the only source in the cell}\right).
\end{eqnarray*}

Since the nodes are uniformly and randomly positioned, from the Chernoff bound, we have
\begin{eqnarray*}
\Pr\left(\hbox{$\geq 9p/10$ destinations in the cell}\right)\geq 1-2e^{-\frac{p}{300}}.
\end{eqnarray*}
Note that there are $n_s$ sources in the network, so
\begin{eqnarray*}
\Pr\left({{\cal B}_i[t]}\right) &\geq&
\left(1-2e^{-\frac{p}{300}}\right)\left(1-\frac{1}{n_s}\right)^{n_s-1},
\end{eqnarray*}
which implies that for large $p$ and $n_s,$ we have $$\Pr\left({{\cal B}_i[t]}\right)\geq 0.36.$$ Then from the Chernoff bound again, we have that for sufficiently large $D,$
\begin{eqnarray}
\Pr\left(\sum_{t=1}^{D} 1_{{\cal B}_{i}[t]}\geq \frac{D}{3}\right)
\geq 1-e^{-\frac{D}{3000}}\label{eq: SB_2}
\end{eqnarray}
Thus, with high probability, more than $D/3$ coded packets are
broadcast, and each broadcast generates $9p/10$ copies.\\

\noindent{\bf Analysis of step 2}

For analysis purpose, we dropped some of the duplicate packets to guarantee that a mobile carries at most one packet for each multicast session other than the session it belongs to. We next study the number of coded packets that have more than $4p/5$ duplicate copies.

Note that the number of duplicate packets of session $i$ left in the network after the deletion is equal to the number of distinct mobiles receiving duplicate packets from session $i.$ Assume that source $i$ sends out $D_b$ coded packets. The number of duplicate copies left after the deletion is the same as the number of nonempty bins of the following
balls-and-bins problem: \emph{There are  $n_s p-1$ bins. At
each time slot, $9 p/10$ bins are selected to receive a ball in each
of them. This process is repeated by $D_b$ times.}

Let $N_1$ to be the number of duplicate packets belonging to multicast session $i$ after the deletion. From Lemma 22 in \cite{YinYanSri_08}, we have
$$\Pr\left(N_1\geq (1-\delta)(n_s p-1)\tilde{p}_1\right) \geq 1-2e^{-\delta^2 (n_s p-1)\tilde{p}_1/3},$$
where
\begin{eqnarray*}
(n_sp-1)\tilde{p}_1&=&(n_sp-1)\left(1-e^{-D_b\times\frac{9p}{10}\times\frac{1}{n_sp-1}}\right)\\
&\geq&(n_sp-1)\left( 1-e^{-\frac{9D_b}{10n_s}}\right)\\
&\geq&(n_sp-1)\left(\frac{9D_b}{10n_s}-\frac{1}{2}\left(\frac{9D_b}{10n_s}\right)^2\right)\\
&\geq& \frac{44}{49}D_b p.
\end{eqnarray*} where the last inequality holds for sufficiently large $n_s$ (recall that $D=o(n_s)$ under the assumption of the theorem). Choosing $\delta=1/50,$ we have that for sufficiently large $n_s$ and $p,$
\begin{eqnarray}\Pr\left(\left.N_1\geq \frac{22}{25} D_bp \right| \sum_{t=1}^{D} 1_{{\cal B}_{i}[t]}=D_b\right)\geq 1-2e^{-\frac{D}{10000}}. \label{eq: bb1_d}\end{eqnarray}

Given that there are more than $22D_b p/25$ duplicate packets left in the network, we can easily verify that more than $2D_b/5$ coded packets will have $4p/5$ duplicate copies because otherwise less than $22D_bp /25$ duplicate packets would be left.  Letting $A_i$ denote the number of coded packets of session $i,$ which has more than $4p/5$ duplicate packets after the deletion, we have
\begin{eqnarray}
\Pr\left(\left.A_i\geq \frac{2D}{15} \right|\sum_{t=1}^{D}
1_{{\cal B}_{i}[t]}\geq \frac{D}{3}\right)\geq
1-2e^{-\frac{D}{10000}}.\label{eq: R}\end{eqnarray}
\emph{Note that after the deletion, all duplicate packets belonging to the same multicast session are carried by different mobile nodes.}\\

%%%%%%%%%%%%%%%%%%%%%%%%%%%%%%%%%%%%%% Appendix C  Proof of Receiving %%%%%%%%%%%%%%%%%%%%%%%%%%%%%%%%%%%%%%%%%%%%%%%%%%%%%
\noindent{\bf Analysis of step 3}

We consider a coded packet of multicast session $i,$ which has at least $\frac{4p}{5}$ duplicate copies after the deletion. Let ${\cal D}_l[t]$ denote the event that the coded packet is delivered to its $l^{\rm th}$ destination at time slot $t.$

First we consider the probability that one of the duplicate copies of the coded packet is in the same cell with its $l^{\rm th}$ destination. In the receiving phase, we use the cell with each side of length equal to $1/\sqrt[4]{n_s p^2 D},$ so the average number of nodes in each cell is $$\frac{n_s(p+1)}{\sqrt{n_s p^2 D}}\geq \sqrt{\frac{n_s}{D}}.$$

Recall that the duplicate packets belonging to the same multicast session are carried by distinct mobiles after the deletion, so their mobilities are
independent. Assuming the number of duplicate copies of the coded packet under consideration is $M,$ we have
\begin{eqnarray*}
&\Pr\left(\hbox{only one copy is deliverable to the $l^{\rm th}$ destination}\right)\\
& =  M\frac{1}{\sqrt{n_s p^2 D}}\left(1-\frac{1}{\sqrt{n_s p^2 D}}\right)^{M-1}.
\end{eqnarray*}
Note that $M<p,$ so as $n_s\rightarrow \infty,$ we have $$\left(1-\frac{1}{\sqrt{n_s p^2 D}}\right)^{M-1} \rightarrow e^{-\frac{1}{\sqrt{n_s D}}}\rightarrow 1.$$ For sufficiently large $n_s,$ we have
\begin{align}
&\Pr\left(\hbox{only one copy is deliverable to the $l^{\rm th}$ destination}\right)\nonumber\\
& \geq \frac{39}{50\sqrt{n_s D}}. \label{eq: D1}
\end{align}

Next, we consider the probability that the duplicate copy is delivered given that it is the only copy which is deliverable to the $l^{\rm th}$ destination. Suppose we have $\bar{M}$ nodes in the cell containing the $l^{\rm th}$ destination. According to the Chernoff bound, we have
\begin{eqnarray}\Pr\left(\bar{M}\leq \frac{11}{10}\sqrt{\frac{n_s}{D}}\right)\geq 1-e^{-\frac{1}{300}\sqrt{\frac{n_s}{D}}}.\label{eq: D2}\end{eqnarray}
Note the deliverable copy to the $l^{\rm th}$ destination will be delivered if the $\bar{M}-2$ other mobiles (other than the mobile carrying the copy and the $l^{\rm th}$ destination for the copy) do not carry deliverable packets and there are no deliverable packets for the $\bar{M}-2$ mobiles.

Now given $K$ mobiles already in the cell, we study the probability that no more deliverable packet appears when we add another mobile. First, the new mobile should not be the destination of any duplicate packets already in the cell. Each mobile carries at most $D$ duplicate packets, so at most $KD$ duplicate packets are already in the cell. Each duplicate packet has $p$ destinations. For each duplicate packet, we have
\begin{eqnarray*}
\Pr\left(\textrm{the new mobile is its destination}\right) = \frac{p}{n_s(p+1)-K}.
\end{eqnarray*}
Thus, from the union bound, we have
\begin{eqnarray}
&\Pr\left(\textrm{the new mobile is a new destination}\right) \nonumber\\
&\leq \frac{pKD}{n_s(p+1)-K}.\label{eq: 1}
\end{eqnarray}
Note that each source sends out no more than $D$ duplicate packets and each duplicate packet has no more than $p$ copies, so at most $KDp$
mobiles carry the duplicate packets towards the $K$ existing mobiles in the cell, and
\begin{align}
&\Pr\left(\textrm{new added mobile brings new deliverable packets}\right) \nonumber\\
&\leq \frac{KDp}{n_s(p+1)-K}.\label{eq: 2}
\end{align}

From inequalities (\ref{eq: 1}) and (\ref{eq: 2}), we can conclude that the probability that the new added mobile does not change the number of deliverable packets in the cell is greater than
$$1-\frac{2KDp}{n_s(p+1)-K}.$$ Starting from the mobile carrying the duplicate packet and the $l^{\rm th}$ destination of the packet, the probability that the number of deliverable packets does not change after adding additional $\bar{M}-2$ mobiles is greater than
\begin{align*}
\displaystyle\prod_{K=2}^{\bar{M}} \left(1-\frac{2KDp}{n_s(p+1)-K}\right) \geq \left(1-\frac{2\bar{M}Dp}{n_s(p+1)-\bar{M}}\right)^{\bar{M}-2}.
\end{align*}

When $\bar{M}\leq \frac{11}{10}\sqrt{\frac{n_s}{D}},$ we have that for sufficiently large $n_s,$
\begin{eqnarray*}
\frac{2\bar{M}Dp}{n_s(p+1)-\bar{M}} \left(\bar{M}-2\right) \leq 2.5,
\end{eqnarray*}
and
\begin{eqnarray}
\displaystyle\prod_{K=2}^{\bar{M}} \left(1-\frac{2pKD}{n_s(p+1)-K}\right) &\geq& e^{-2.5}. \label{eq: D3}
\end{eqnarray}

Now according to inequalities (\ref{eq: D1}), (\ref{eq: D2}), and (\ref{eq: D3}), we can conclude that for sufficiently large $n_s,$
\begin{eqnarray}
\Pr\left({{\cal D}_l[t]}\right)\geq \frac{1}{16}\frac{1}{\sqrt{n_s D}}, \label{eq: HD_iid}
\end{eqnarray}
which implies at each time slot, a coded packet with at least $4p/5$ duplicate copies is delivered to its $l^{\rm th}$ destination with
a probability at least $\frac{1}{16\sqrt{n_s D}}.$

Note at each time slot, one destination can receive at most one packet. So the number of distinct coded packets delivered to the $l^{\rm th}$ destination of multicast session $i$ is the same as the number of nonempty bins of following balls-and-bins
problem: \emph{Suppose we have $\frac{2D}{15}$ bins and one trash can. At each time
slot, we drop a ball. Each bin receives the ball with probability $\frac{1}{16 \sqrt{n_s D}},$ and the trash can
receives the ball with probability $1-P,$ where
$$P=\frac{D }{120\sqrt{n_sD}}.$$ Repeat this $D$ times, i.e., $D$ balls are dropped.} Note the bins represent the distinct coded packets, the balls represent successful
deliveries, and a ball is dropped in a specific bin means the corresponding coded packet is
delivered to the destination.

Let $X_{i,l}$ denote the number of distinct coded packets delivered to destination $l$ of session $i.$ \emph{Under the condition that at least $2D/15$ coded packets of session $i$ have more than $4p/5$ duplicate copies each,} $X_{i,l}$ is the same as the number of nonempty bins of the above balls-and-bins problem. Choose $\delta=1/6.$ From
Lemma 22 in \cite{YinYanSri_08} we have
\begin{eqnarray*}
&\Pr\left( \left.X_{i,l}\geq \frac{5}{6}\frac{2D}{15}\left( 1- e^{-\frac{D}{16\sqrt{n_s D}}}\right)\right|A_i\geq \frac{2D}{15}\right) \\
&\geq 1-2e^{-\frac{D}{810}\left(1-e^{-\frac{D}{16\sqrt{n_s D}}}\right)}.
\end{eqnarray*}
Using the fact that $1-e^{-x}\geq x-x^2/2$ for any $x \geq 0$
\begin{eqnarray}
\Pr\left(\left.X_{i,l}\geq \frac{D}{400}\sqrt{\frac{D}{n_s}}\right|A_i\geq \frac{2D}{15}\right) \geq 1-2e^{-\frac{D}{13000}\sqrt{\frac{D}{n_s}}}. \label{eq: rec}
\end{eqnarray}
Note that $D\sqrt{\frac{D}{n_s}}\rightarrow \infty$ under the assumption of the theorem ($D=\omega{\sqrt[3]{n_s} \log(n_sp)}$).\\

\noindent{\bf Summary}

Combining inequalities (\ref{eq: SB_2}), (\ref{eq: R}) and (\ref{eq: rec}), we can conclude that
\begin{eqnarray*}
&&\Pr\left(X_{i,l}\geq \frac{D}{400}\sqrt{\frac{D}{n_s}}\right) \\
&\geq& 1-e^{-\frac{D}{3000}}-e^{-\frac{D}{10000}}-2e^{-\frac{D}{13000}\sqrt{\frac{D}{n_s}}}. \label{eq: rec2}
\end{eqnarray*} Furthermore, for sufficiently large $n_s$ and  $p,$ we also have
\begin{eqnarray*}
&&\Pr\left(X_{i,l}\geq \frac{D}{400}\sqrt{\frac{D}{n_s}} \hbox{ for all } i, l\right) \\
&\geq& 1-n_sp\left(e^{-\frac{D}{3000}}-e^{-\frac{D}{10000}}-2e^{-\frac{D}{13000}\sqrt{\frac{D}{n_s}}}\right)\\
&\geq& 1-\frac{1}{n_s p},
\end{eqnarray*} where the last inequality holds under the assumption of the theorem ($D=\omega(\sqrt[3]{n_s }\log(n_sp))$). Note that a destination can decode the $\frac{D}{500}\sqrt{\frac{D}{n_s}}$ data packets after getting $\frac{D}{400}\sqrt{\frac{D}{n_s}}$ coded packets with a high probability, so the theorem holds.
\end{proof}

From the theorem above, we can see that the throughput per multicast session is
$$\frac{D}{500}\sqrt{\frac{D}{n_s}}\times\frac{1}{2D}=\Theta\left(\sqrt{\frac{D}{n_s}}\right).$$

\section{Simulations}
In this section, we use simulations to verify our theoretical results. We implement the joint coding-scheduling algorithm for different mobility models, including 2D-i.i.d. mobility, random walk model and random waypoint model. We consider an MANET consisting of $n_s$ multicast sessions, and the mobiles are deployed in an unit square with $n_s$ sub-squares. The random walk model and random waypoint model are defined in the following:
\begin{itemize}
  \item \textbf{Random Walk Model:} At the beginning of each time slot, a mobile moves from its current sub-square cell to one of its eight neighboring sub-squares or stays at the current sub-square. Each of the actions occurs with probability $1/9.$
  \item \textbf{Random Waypoint Model \cite{ShaMaz_04}:} At the beginning of each time slot, a mobile generates a two-dimensional vector $V=[V_x, V_y],$ where the values of $V_x$ and $V_y$ are uniformly selected from $[1/\sqrt{n_s}, 3/\sqrt{n_s}].$ The mobile moves a distance of $V_x$ along the horizontal direction, and a distance of $V_y$ along the vertical direction.
\end{itemize}

%We evaluate the performance of the proposed algorithm under the 2D-i.i.d. mobility model, random walk model, and random waypoint model. W
%First, in our simulation, a relay node can carry more than one packet from the same multicast session, i.e., a node can carry two packets sent by source node $i.$ This will not affect the result though the packets carried by relay node is increased. Because when these packets from the same multicast session meet their destinations, only one of them can be delivered in one time slot. Second, in the algorithm, a source node can broadcast only in a good cell (the cell has more than $9p/10$ nodes). While in our simulation, a souse node can broadcast without this condition, which is more realistic. This will not affect the result much, since more than $90\%$ broadcasting cells are good cell.

%Our simulations consists of three parts: changing the number of source node $n_s,$ changing the number of destinations per multicast session $p,$ and changing the delay $D.$

\subsection{Multicast throughput with different numbers of sessions}
In this simulation, the number of multicast sessions ($n_s$) varies from $200$ to $1000,$ each multicast session contains $p=10$ destinations, and the delay constraint is set to be $2D=200$ time slots. Figure \ref{fig: changing ns} shows the throughput per $2D$ time slots of the three mobility models with different values of $n_s.$\footnote{In our simulations, we only count the number of distinct packets delivered that are successfully delivered before their deadlines expire. We do not consider coding and decoding in our simulations.}

Our theoretical analysis indicates that the throughput is $\Theta\left(2D\sqrt{\frac{2D}{n_s}}\right).$ To verify this , we plot $\alpha \left(2D \sqrt{\frac{2D}{n_s}}\right)$ in Figure \ref{fig: changing ns}, where $\alpha=0.09$ is obtained by using Matlab to fit the simulation data of the random walk model. Our simulation result shows that the throughput under the three mobility models all evolves as $\Theta\left(\sqrt{\frac{D}{n_s}}\right).$ Also, the 2D-i.i.d. mobility has the largest throughput and the random walk model has the smallest throughput. This is because the distance a mobile can move within a time slot is the largest under the 2D-i.i.d. model and is the smallest under the random walk model. \emph{Our results indicates that the throughput is an increasing function of the mobility speed (the distance a mobile can move within a time slot).}

%The largest difference between the result of the random walk model and the curve of $\alpha \left(\sqrt{2D/n_s}\right)$ takes place when $n_s=100,$ which is normal since we assume $D=o(n_s).$ Other than that, the result showed that the throughput is of order $\Theta(\sqrt{\frac{D}{n_s}}).$

\begin{figure}[hbt]
\centering{\epsfig{file=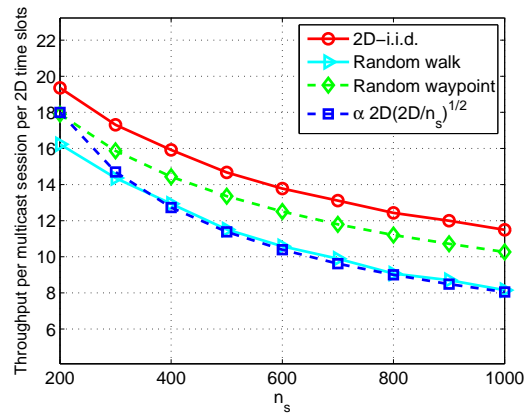,width=3.0in}}
\caption{Throughput per multicast session per $2D$ time slots with different $n_s'$s} \label{fig: changing ns}
\end{figure}

\subsection{Multicast throughput with different delay constraints}
In this simulation, we fix $n_s=500$ and $p=10,$ and change $D$ from $100$ to $400$ with a step size of $50.$ We also use Matlab to fit the data under the random walk model to get the coefficient $\alpha=0.075.$ For all three mobility models, the simulation results match the theoretical order result.
\begin{figure}[hbt]
\centering{\epsfig{file=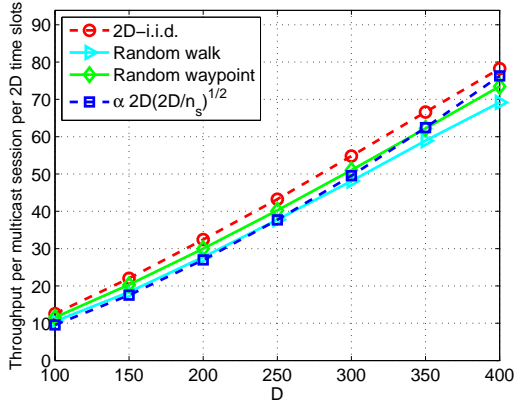,width=3.0in}}
\caption{Throughput per multicast session per $2D$ time slots with different delay constraints} \label{fig: changing d}
\end{figure}

\subsection{Multicast throughput with different session sizes}
In this simulation, $n_s=500,$ the delay constraint is set to be $2D=200,$  and $p$ varies from $4$ to $40$ with a step size of $4.$ Figure \ref{fig: changing p} shows that the throughput is almost invariant with respect to $p.$
\begin{figure}[hbt]
\centering{\epsfig{file=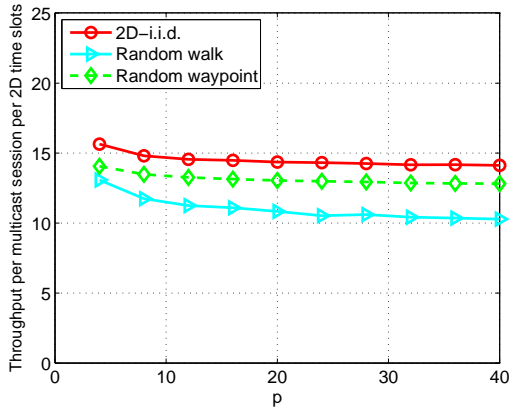,width=3.0in}}
\caption{Throughput per multicast session per $2D$ time slots with different $p'$s} \label{fig: changing p}
\end{figure}

%In summary, our simulations show that the throughput per multicast session is of order $\Theta(\sqrt{\frac{D}{n_s}}),$ and the coefficients vary only a little.
From the simulations above, we can see that the $\Theta\left(\sqrt{\frac{D}{n_s}}\right)$ throughput is achievable not only under 2D-i.i.d. model, but also under more realistic models such as random walk model and random waypoint model as well, which indicates that the theoretical results we obtain based on the 2D-i.i.d. mobility model hold for more realistic models as well.

\section{Conclusion}
\label{sec: con}
In this paper, we studied the delay constrained multicast capacity of large-scale MANETs. We first proved that the upper-bound on throughput per multicast session is $O\left(\min\left\{1, (\log p)(\log \left(n_sp\right)) \sqrt{\frac{D}{n_s}}\right\}\right),$ and then proposed a joint coding-scheduling algorithm that achieves a throughput of $\Theta\left(\min\left\{1,\sqrt{\frac{D}{n_s}}\right\}\right).$ We also validated our theoretical results using simulations, which indicated that the results based on 2D-i.i.d. model are also valid for random walk model and random way point model. In our future research, we will study \emph{(i)} the impact of mobile velocity on the communication delay and multicast throughput; and \emph{(ii)} the delay constrained multicast capacity of MANETs with heterogeneous multicast sessions, e.g., different multicast sessions have different sizes and different delay constraints.

{\bf Acknowledgement: } Research supported by the DTRA grant HDTRA1-08-1-0016.
\bibliographystyle{IEEEtran}
\bibliography{./Ying_Lei_Reference}

\section*{Appendix A: Proof of Lemma \ref{lem: cluster}}
%
%%Consider a destination belonging to multicast session $i,$ and an arbitrary circular area with radius $L$ containing node $i,$ let $h_{i(t)}$ denote the number of destinations of the same type as node $i$ that are in the area at time slot $t$. Let $C_{i(t)}$ denote the event that node $i$ is in
%%a cluster of the same type at time slot $t$, which means that $h_{i(t)}>k(1+pL^2)\log n$.

Recall that each multicast session contains $p$ destinations. The probability that a mobile is within a distance of $\gamma$ from node $j$ is $\pi \gamma^2.$ Thus, $H(j, \gamma, t)$ is a binomial random variable with $p-1$ trials and probability of a success $\pi \gamma^2,$ and
\begin{eqnarray*}
E[H(j, \gamma, t)]=(p-1)  \pi \gamma^2.
\end{eqnarray*}

Now choose $\kappa$ such that \begin{eqnarray}\kappa (1+p \gamma^2) \log (n_sp)>(p-1) \pi \gamma^2.\label{eq: kappa}\end{eqnarray} Note that $\frac{(p-1) \pi \gamma^2}{(1+p \gamma^2) \log (n_sp)}<\pi$ for $n_sp>3,$ so we can choose $\kappa$ independent of $n_s$ and $p.$  Next, define $$\delta= \frac{\kappa (1+p\gamma^2)\log(n_sp)}{(p-1)\pi \gamma^2}-1,$$ which is positive due to inequality (\ref{eq: kappa}).

According to the Chernoff bound \cite{MitUpf_05}, we have
\begin{eqnarray}
&&\Pr\left(H(j,\gamma, t)>\kappa (1+p \gamma^2) \log(n_s p) \right) \nonumber\\
& \leq & \left( {\frac{e^\delta}{(1+\delta)^{1+\delta}}} \right)^{(p-1)  \pi \gamma^2}\nonumber\\
& \leq & \left(\frac{e}{1+\delta}\right)^{(p-1)  \pi \gamma^2(1+\delta)}\nonumber\\
& = & \left(\frac{e}{1+\delta}\right)^{\kappa (1+p \gamma^2) \log(n_sp)}\nonumber\\
& = & \left(\frac{e}{\frac{\kappa (1+p\gamma^2)\log(n_sp)}{(p-1)\pi \gamma^2}}\right)^{\kappa (1+p \gamma^2) \log(n_s p)}\nonumber\\
& \leq_{(a)} & e^{-\kappa (1+p \gamma^2) \log(n_s p)}\nonumber\\
& \leq & e^{- \kappa \log(n_s p)},\label{eq: b_freerider}
\end{eqnarray}
where inequality $(a)$ holds for any $\kappa$ such that
\begin{eqnarray*}
\frac{e}{\frac{\kappa (1+p\gamma^2)(\log p+ \log n_s)}{(p-1)\pi \gamma^2}} \leq  e^{-1}.
\end{eqnarray*}

Thus, we can conclude that there exists $\kappa>0,$ which is independent of $n_s$ and $p,$ such that
\begin{eqnarray*}
E[Z_{\gamma, \kappa}[T]] & \leq & E\left[\sum_{j: \hbox{\scriptsize $j$ is a destination}}\sum_{t=1}^T 1_{H(j, \gamma, t)\geq \kappa \gamma^2 \log (n_s p)} \right]\\
& = & \sum_{t=1}^{T}\sum_{j} E\left[1_{H(j, \gamma, t)\geq \kappa \gamma^2 \log(n_s p)} \right]\\
& \leq &n_s p Te^{- \kappa \log(n_s p)},
\end{eqnarray*} and the theorem holds by guaranteeing $\kappa>2.$

\section*{Appendix B: Proof of Lemma \ref{lem: free}}
First, we present some important inequalities that will be used to obtain the upper-bound on throughput. Let ${\cal R}[T]$ denote the number of bits that are carried by the mobiles other than their sources at time $T$ (including those whose deadlines have expired), and $\alpha_B$ the transmission radius used to deliver bit $B.$  The following lemma is presented in \cite{YinYanSri_08}. Inequality (\ref{eq:
c1_d}) holds since the total number of bits transmitted or received in $T$
time slots cannot exceed $n_s p WT.$ Inequality (\ref{eq: c2_d}) holds since
the total number of bits transmitted to relay nodes cannot exceed $n_s(p+1)WT.$
Inequality (\ref{eq: c3_d}) holds since each successful target delivery associates an exclusion region
which is a disk with radius $\Delta\alpha_B/2.$

\begin{lemma}
Under the simplified protocol model, the following inequalities hold:
\begin{eqnarray}
{\Lambda}[T] &\leq& n_s p WT \label{eq: c1_d}\\
|{\cal R}[T]| &\leq& n_s(p+1)WT \label{eq: c2_d}\\
\sum_{B=1}^{B[T]} \frac{\Delta^2}{4} \left(\alpha_B\right)^2 &\leq& \frac{WT}{\pi} \label{eq: c3_d}
\end{eqnarray}
where $|{\cal R}[T]|$ is the cardinality of set ${\cal R}[T]$. \rightline{$\square$}
\end{lemma}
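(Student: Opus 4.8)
The plan is to treat the three inequalities as three independent resource-conservation constraints, proving each by a direct per-slot counting argument and then summing over the $T$ time slots. The underlying observation common to all three is that under the protocol model the exclusion disks of simultaneously successful transmissions, each centered at its decoding receiver, must be pairwise disjoint; in particular no node can be the decoding receiver of two transmissions in the same slot, so every node decodes at most one transmission and hence absorbs at most $W$ bits per time slot.

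For inequality~(\ref{eq: c1_d}), I would apply this per-slot receiving bound to destinations. There are exactly $n_s p$ destinations, so over $T$ slots they jointly receive at most $n_s p\, W T$ bits. Since $\Lambda[T]=\sum_j\Lambda_j[T]$ counts only the bits that reach destinations before their deadlines, it is a subset of all bits ever received by destinations, which immediately yields $\Lambda[T]\le n_s p\, W T$. For inequality~(\ref{eq: c2_d}) I would reuse the same bound but applied to every mobile rather than only to destinations. Every bit counted in $\mathcal{R}[T]$ is, by definition, held at time $T$ by a node other than its source, hence was received by some relay in an earlier slot. Each of the $n=n_s(p+1)$ mobiles decodes at most $W$ bits per slot, so the number of distinct bits ever received by relays over the horizon is at most $n_s(p+1)WT$, which upper-bounds $|\mathcal{R}[T]|$; note that a single bit broadcast to many relays is still counted once in $\mathcal{R}[T]$ and only inflates the raw reception count, so the inequality is safe.

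The step I expect to demand the most care is the packing bound~(\ref{eq: c3_d}). Here I would fix a single slot and restrict attention to its \emph{target} deliveries: a target delivery transmitting with radius $\alpha_B$ reserves an exclusion disk of radius $\Delta\alpha_B/2$ and area $\pi\Delta^2\alpha_B^2/4$, and the protocol model forces these disks to be pairwise disjoint inside the unit-area torus, so their areas sum to at most $1$ in that slot. The delicate bookkeeping is that one transmission carries $W$ bits yet contributes only one exclusion disk: rewriting the per-transmission area bound as a sum over the individual bits (all $W$ bits of a given transmission sharing the same $\alpha_B$) introduces exactly a factor of $W$, so $\sum_B \frac{\Delta^2}{4}\alpha_B^2\le \frac{W}{\pi}$ over the bits delivered in that slot. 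Summing this inequality over the $T$ slots gives~(\ref{eq: c3_d}).

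The main obstacle is precisely this bit-versus-transmission accounting, combined with two points that are easy to mishandle: only \emph{simultaneous} target deliveries need disjoint exclusion regions, so regions from different slots may overlap freely and the per-slot budget of $1$ resets each slot; and free-ride deliveries must be excluded from the packing count, since otherwise a single exclusion disk would be charged more than once and the packing argument would break. Once these are pinned down, each of the three bounds follows from elementary summation, which is why I would present~(\ref{eq: c1_d})--(\ref{eq: c2_d}) briefly and devote the bulk of the write-up to the disjointness-plus-area argument for~(\ref{eq: c3_d}).
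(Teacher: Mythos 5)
Your proposal is correct and takes essentially the same approach as the paper, which justifies the first two inequalities by the same per-node, per-slot reception bound (at most $W$ bits per destination or relay per slot, summed over the $n_s p$ destinations, respectively the $n_s(p+1)$ mobiles, and the $T$ slots) and the third by the disjointness of the exclusion disks of radius $\Delta\alpha_B/2$ associated with target deliveries. The only difference is one of detail: the paper states each bound in a single sentence, citing prior work, whereas you spell out the decode-at-most-one-transmission-per-slot observation, the factor-$W$ bit-versus-transmission accounting, and the exclusion of free-ride deliveries from the packing count, all of which the paper leaves implicit.
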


We index the target deliveries using $B.$ Let $\beta_B$ denote the number of deliveries associated with target delivery $B.$ Given a $\gamma \in[0, 1],$ we classify the target-deliveries according to $\alpha_B.$ We say a target-delivery belonging to class $(\gamma, m)$ if $2^{m-1}\gamma \leq \alpha_B< 2^{m}\gamma .$ Thus, $\Lambda[T]$ can be written as
\begin{eqnarray*}
\displaystyle E[\Lambda[T]] &\leq  \displaystyle E\left[\sum_{B=1}^{B[T]} \beta_B 1_{\alpha_B< \gamma}\right]+\\
&\displaystyle \sum_{m=1}^{\lceil-\log_2 \gamma\rceil} E\left[\sum_{B=1}^{B[T]} \beta_B 1_{2^{m-1}\gamma \leq \alpha_B< 2^{m} \gamma}\right].
\end{eqnarray*}

\begin{figure}[hbt]
\centering{\epsfig{file=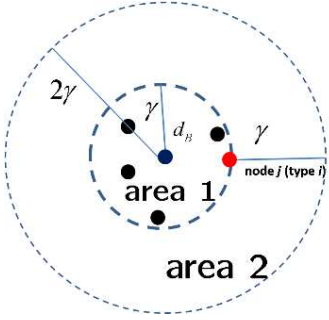,width=2in}}
\caption{$H(d_B, 2\gamma, t)\geq \beta_B$} \label{fig: cluster}
\end{figure}

Note that $\beta_B> \kappa (1+4 p \gamma^2) \log (n_s p)$ implies that $$H(d_B, 2\gamma, t)\geq \kappa (1+4 p \gamma^2) \log (n_s p)$$ as shown in Figure \ref{fig: cluster}, where $d_B$ is the destination receiving the target delivery,  so we have
\begin{align}
\beta_B 1_{\alpha_B< \gamma} \leq &  \kappa (1+4p\gamma^2)\log (n_s p) 1_{\substack{\alpha_B< \gamma\\ H(d_B, 2\gamma, t)< \kappa (1+4p \gamma^2) \log (n_s p)} } \nonumber\\
&+ \beta_B 1_{\substack{\alpha_B< \gamma\\ H(d_B, 2\gamma, t)\geq \kappa (1+4 p \gamma^2) \log (n_s p)}}.\label{eq: beta}
\end{align}
Furthermore, it can be easily verified that
\begin{align}
&\displaystyle E\left[\sum_{B=1}^{B[T]} \beta_B 1_{\substack{\alpha_B<\gamma\\ H(d_B, 2\gamma, t)\geq \kappa (1+4 p \gamma^2) \log (n_s p)}}\right]\nonumber\\
&\displaystyle +\sum_{m=1}^{\lceil-\log_2 \gamma\rceil} E\left[\sum_{B=1}^{B[T]} \beta_B 1_{\substack{2^{m-1}\gamma\leq \alpha_B<2^{m} \gamma\\H(d_B, 2^{m+1}\gamma, t)\geq \kappa (1+p \gamma^2 2^{2m+2}) \log (n_s p)}} \right]\nonumber\\
&\leq_{(a)} \frac{pWT}{n_s^2 p^2}\nonumber\\
&=\frac{WT}{n_s^2p},\label{eq: sum}
\end{align} where inequality (a) yields from inequality (\ref{eq: b_freerider}).

Now from the inequalities (\ref{eq: beta}) and (\ref{eq: sum}), we have for any $0<\gamma<1,$
\begin{eqnarray}
&&\displaystyle E[\Lambda[T]]\\ \nonumber
&\leq & \frac{WT}{n_s^2 p}+ \kappa \log ( n_s p) \left(\left(1+4 p \gamma^2\right)  E\left[\sum_{B=1}^{B[T]} 1_{\alpha_B< \gamma}\right]+\right.\\\nonumber
&& \left.\sum_{m=1}^{\lceil-\log_2 \gamma\rceil} \left(1+p 2^{2m+2} \gamma^2\right) E\left[\sum_{B=1}^{B[T]} 1_{2^{m-1}\gamma \leq \alpha_B< 2^{m} \gamma}\right]\right)\\\nonumber
&\leq_{(b)} & \frac{WT}{n_s^2 p}+\kappa \log ( n_s p) \left(1+4 p \gamma^2\right)  E\left[B[T]\right]+\\\nonumber
&& \kappa \log ( n_s p)  \sum_{m=1}^{\lceil-\log \gamma/\log 2\rceil} p 2^{2m+2} \gamma^2 \frac{WT}{\pi \frac{2^{2m-2} \Delta^2 \gamma^2}{4}}\\\nonumber
&\leq & \frac{WT}{n_s^2 p}+\kappa \log ( n_s p) \left(1+4 p \gamma^2\right)  E\left[B[T]\right]+\\\nonumber
&& \frac{64 \kappa W T}{\pi \Delta^2 \log 2}(-\log \gamma)  p \log ( n_s p) \\\nonumber
&\leq & 5 \kappa \log ( n_s p) E\left[B[T]\right]+ \frac{16 \kappa W T}{\Delta^2} p (\log p) \log ( n_s p), \label{eq: midsum}
\end{eqnarray} where inequality (b) yields from inequality (\ref{eq: c3_d}), and the last inequality holds when $\gamma=\frac{1}{\sqrt{p}}.$

\section*{Appendix C: Proof of Lemma \ref{lem: norelay}}

We first bound the total number of targeted deliveries under the constraint that sources need to directly send information to their destinations. Let $s_i$ denote the source of multicast session $i,$ $d_{i,j}$ denote the $j^{\rm th}$ destination of multicast session $i,$ and $D(s_i, t)$ the distance between source $s_i$ and its nearest destination, i.e., $$D(s_i, t)=\min_{1\leq j\leq p} \hbox{dist}(s_i, d_{i,j})(t).$$ Thus, we have
$$\Pr\left(D(s_i, t) \leq L\right)\leq 1-(1-\pi L^2)^p,$$ which implies
\begin{eqnarray*}
E\left[\sum_{t=1}^T\sum_{i=1}^{n_s} 1_{D(s_i, t)\leq L}\right]\leq   T n_sp \pi L^2.
\end{eqnarray*}
Since at most $W$ bits a source can send during each transmission, we
further have
\begin{eqnarray*}
E\left[B[T] \right] &=& E\left[\sum_{B=1}^{B[T]} 1_{\alpha_B\leq L}\right]+ E\left[\sum_{B=1}^{B[T]} 1_{\alpha_B> L}\right]\\
&\leq & W E\left[\sum_{t=1}^T\sum_{i=1}^{n_s} 1_{D(s_i, t)\leq L}\right] +E\left[\sum_{B=1}^{B[T]} 1_{\alpha_B> L}\right]\nonumber\\
& \leq & WT n_s p  \pi  L^2 +E\left[\sum_{B=1}^{B[T]} 1_{\alpha_B> L}\right].  \label{eq: up2}
\end{eqnarray*}

Next, applying the Cauchy-Schwarz inequality  to inequality (\ref{eq: c3_d}), we can obtain that
\begin{eqnarray*}
\left(\sum_{B=1}^{B[T]} \alpha_B\right)^2 &\leq&  \left(\sum_{B=1}^{B[T]} 1\right)\left( \sum_{B=1}^{B[T]} \left(\alpha_B\right)^2 \right)\\
&\leq& B[T] \frac{4 W T}{\pi \Delta^2},
\end{eqnarray*} which implies that
\begin{eqnarray*}
\sqrt{\frac{4 W T}{\pi \Delta^2}} \sqrt{E[B[T]]}&\geq& E\left[\sqrt{ \frac{4WT}{\pi \Delta^2} B[T]}\right] \\
&\geq& E\left[\sum_{B=1}^{B[T]} \alpha_B\right]\\
&\geq& L E\left[\sum_{B=1}^{B[T]} 1_{\alpha_B>L}\right]\\
&\geq& L \left(E[B[T]]- WT n_sp \pi L^2\right),
\end{eqnarray*} where the first inequality follows from the Jensen's inequality. Now we choose $L=\sqrt{\frac{E[B[T]]}{2 WT n_s p \pi}},$ we can obtain that $$WT \sqrt{\frac{32}{\Delta^2}}\sqrt{n_s p}\geq E[B[T]].$$ By substituting into the bound on $\Lambda[T]$ in Lemma \ref{lem: free}, we have
\begin{eqnarray*}
E[\Lambda[T]] &\leq& 5\kappa \log ( n_s p) \left(WT \sqrt{\frac{32}{\Delta^2}} \sqrt{n_s p}\right)+ \\
&&\frac{16 \kappa W T}{\Delta^2} p (\log p) \log ( n_s p).
\end{eqnarray*}

\section*{Appendix D: Proof of Lemma \ref{lem: relay}}

 Denote by $H(b)$ the minimum distance between the relay carrying bit $b$ and any of the $p$ destinations of the bit during $D$ consecutive time slots. We have
$$\Pr\left(H(b)\leq L \right)\leq 1-(1-\pi L^2)^{Dp},$$ which implies
\begin{eqnarray*}
E\left[\sum_{b\in {\cal R}[T]} 1_{H(b)\leq L}\right]\leq  n_s(p+1)WT \pi L^2 D p,
\end{eqnarray*} and
\begin{align}
E\left[B[T] \right] &=& E\left[\sum_{B=1}^{B[T]} 1_{\alpha_B\leq L}\right]+ E\left[\sum_{B=1}^{B[T]} 1_{\alpha_B> L}\right]\nonumber\\
&\leq & E\left[\sum_{B=1}^{B[T]} 1_{H(B)\leq L}\right] +E\left[\sum_{B=1}^{B[T]} 1_{\alpha_B> L}\right]\nonumber\\
& \leq &  n_s(p+1)WT \pi L^2 D p +E\left[\sum_{B=1}^{B[T]} 1_{\alpha_B> L}\right].  \label{eq: up2}
\end{align}

Next, applying the Cauchy-Schwarz inequality  to inequality (\ref{eq: c3_d}), we can obtain that
\begin{eqnarray*}
\left(\sum_{B=1}^{B[T]} \alpha_B\right)^2 &\leq&  \left(\sum_{B=1}^{B[T]} 1\right)\left( \sum_{B=1}^{B[T]} \left(\alpha_B\right)^2 \right)\\
&\leq& B[T] \frac{4 W T}{\pi \Delta^2},
\end{eqnarray*} which implies that
\begin{eqnarray*}
\sqrt{\frac{4 W T}{\pi \Delta^2}} \sqrt{E[B[T]]}&\geq& E\left[\sqrt{ \frac{4 W T}{\pi \Delta^2} B[T]}\right] \\
&\geq& E\left[\sum_{B=1}^{B[T]} \alpha_B\right]\\
&\geq& L E\left[\sum_{B=1}^{B[T]} 1_{\alpha_B>L}\right]\\
&\geq& L \left(E[B[T]]- n_s(p+1)WT \pi L^2 D p\right),
\end{eqnarray*} where the first inequality follows from the Jensen's inequality and the last inequality follows from inequality (\ref{eq: up2})

Since the inequality holds for any $L>0.$ By choosing $L=\sqrt{\frac{E[B[T]]}{2 W T \pi n_s(p+1)p D }},$ we can obtain that $$\sqrt{\frac{32 }{\Delta^2}} W T (p+1) \sqrt{n_s D }\geq E[B[T]].$$ After substituting into (\ref{eq: midsum}), we have
\begin{eqnarray*}
E[\Lambda[T]] &\leq& 5 \kappa \log (n_s p) \left(\sqrt{\frac{32 }{\Delta^2}} W T (p+1) \sqrt{n_s D }\right) \\
&& + \frac{16 \kappa W T}{\Delta^2} p (\log p) \log ( n_s p).
\end{eqnarray*}

\end{document}